\newcommand{\cX}{\mathcal{X}}
\newcommand{\cK}{\mathcal{K}}
\newcommand{\cP}{\mathcal{P}}
\newcommand{\bbR}{{\mathbb R}}
\newcommand{\payoff}{f}
\newcommand{\payoffc}{\phi}
\newcommand{\funcsp}{C_b^2}
\newtheorem{theorem}{Theorem}[section]
\newtheorem{prop}[theorem]{Proposition}
\newtheorem{rem}[theorem]{Remark}
\newtheorem{remark}[theorem]{Remark}
\author{Ryuichi Nakajima\footnote{Department of 
Mathematical Informatics, 
Graduate School of Information Science and Technology, 
University of Tokyo.},
Masayuki Kumon\thanks{masayuki\_kumon@smile.odn.ne.jp},
Akimichi Takemura\footnotemark[1]\\
and Kei Takeuchi\footnote{Emeritus, Graduate School of Economics, 
University of Tokyo}}
\date{Revised, June 2010}
\title{Approximations and asymptotics of 
upper hedging prices in multinomial models}
\begin{document}

\maketitle 

\begin{abstract}
We give an exposition and numerical studies 
of upper hedging prices in multinomial models from the viewpoint of 
linear programming and the game-theoretic probability of Shafer and Vovk.
We also show that, as the number of rounds goes to infinity,
the upper hedging price of a European option converges 
to the solution of the Black-Scholes-Barenblatt equation.
%The equation is a generalization of the heat equation. %for Brownian motion. 
%We also discuss numerical solutions to the partial differential equation.  
% By numerical experiments we check accuracies
% of approximations to upper hedging prices.
\end{abstract}

% In this paper we study asymptotics and approximations of 
% upper hedging prices in multinomial models from the viewpoint of 
% linear programming and the game-theoretic probability of Shafer and Vovk.  
% Our discussion is largely expository. However
% we derive a partial differential equation satisfied by the upper hedging 
% price of a European option as the number of rounds goes to infinity. 
% The equation is a generalization of the heat equation. %for Brownian motion. 
% %We also discuss numerical solutions to the partial differential equation.  
% By numerical experiments we check accuracies
% of approximations to upper hedging prices.

\smallskip 
\noindent 
{\it Keywords}: 
Black-Scholes-Barenblatt equation,
%buyer's price,
contingent claim,
Cox-Ross-Rubinstein formula,
incomplete market,
%seller's price,
stochastic control,
trinomial model.

\section{Introduction} 
\label{sec:intro}

The Black-Scholes formula for the geometric Brownian motion model and
the Cox-Ross-Rubinstein formula for the binomial model are now treated
in many standard textbooks on mathematical finance 
(e.g.\ \cite{musiela-rutkowski-1st-ed,shreve-1,shreve-2}).  Since
these models are complete, the exact price for any contingent claim  is
determined by arbitrage argument.  On the other hand, incomplete
models such as the trinomial model are only briefly mentioned in
the textbooks because of difficulty associated with 
indeterminacy of prices of contingent claims.  
% In this paper we
% are concerned on behavior of upper hedging prices of financial derivatives
% in multinomial models.

In fact only a few explicit results seem to be known on upper hedging prices
for the discrete-time multinomial models.  The purpose 
of this paper is to give an exposition of the exact and the asymptotic
behavior of upper hedging prices
of contingent claims in multinomial models.
We also show that the asymptotic upper hedging 
price of a European option is described by the Black-Scholes-Barenblatt equation
(e.g.\ \cite{avellaneda-buff},
\cite{vargiolu-2001}, 
\cite{gozzi-vargiolu}, 
\cite{meyer-2006},
Chapter 4 of \cite{pham}). 
The Black-Scholes-Barenblatt equation 
is usually considered for uncertain volatility models in continuous time
and its relation to multinomial models does not seem to be
stated in literature.

The advantage of discrete-time multinomial models is that
we can exactly compute the upper hedging price by backward induction
for moderate number of rounds
and various approximations to the upper hedging price can be compared to
the exact value.
%
%[... literature survey ...]
Basic facts on the upper hedging price for discrete models
are well explained in Chapter 4 %(``Market imperfections'')
of the first edition of 
Musiela and  Rutkowski \cite{musiela-rutkowski-1st-ed}.
% Unfortunately it is not contained in the second edition of the book.
General treatments of incomplete markets are given in Chapter 5 
of \cite{karatzas-shreve} and  \cite{schachermayer-notes}.
However they are concerned with continuous stochastic processes and
do not contain much numerical information on 
the behavior of upper hedging prices for discrete models.
% One reason for this scarcity of results 
% is that the upper hedging price 
% tends to be trivial in particular for the case of 
% continuous time  processes.  For example the upper hedging price 
% of an unbounded payoff function is often infinite.  In this case 
% the upper hedging price is not very useful and 
% the usual approach to incomplete markets
% (e.g.\ \cite{schachermayer-notes}) is to consider an optimization problem under
% some convenient utility function for choosing a risk neutral measure.
% Also when the market is imcomplete and the the risk neutral measure is not 
% unique, 

An extensive numerical study of hedging in incomplete markets
is given in \cite{bertsimas-etal}.  Its authors consider a hedging strategy
which minimizes the  mean-squared error to the payoff
of a contingent claim under Markov-state dynamics.  As we see in Section
\ref{sec:formulation} below, for studying the behavior of upper hedging
prices we can not make convenient stochastic assumptions such as the
Markov property.  Results more relevant to the present paper have been 
given in \cite{ruschendorf} and \cite{courtois-denuit} by
convex ordering argument.
In particular for discrete time models with bounded
martingale differences, \cite{ruschendorf} proves that the
upper hedging price of a convex contingent claim is given by the  extremal
binomial model.  We reproduce this fact in Proposition \ref{prop:convex} below
by linear programming argument.

In this paper we use the framework of game-theoretic 
probability by Shafer and Vovk.
We prefer the framework because of the following reasons.
First, in game-theoretic probability only the protocol of a game between
``Investor'' and ``Market'' is formulated
without specification of a probability measure.  Therefore there is no need
to distinguish a risk neutral measure from an actual (or a physical) measure and
to consider the equivalence %(i.e.\ mutual absolute continuity) 
between them.  This
is advantageous because the extremal risk neutral measure corresponding 
to the upper hedging price usually has a support smaller than those in the interior
of the set of risk neutral measures. 
Second, some strong properties of a price path of Market can be proved in game-theoretic
probability 
without any stochastic assumption (e.g.\ see \cite{tkt-bernoulli},
\cite{vovk-rough-path}, \cite{vovk-fs} and references therein).  As a referee
pointed out \cite{bick-willinger-1994} studies non-probabilistic approach
for pricing in continuous time.
Third, the notion of upper hedging price is of central importance to game-theoretic
probability as shown in our recent works (\cite{tvs-aism}, \cite{svt-levy-zerone}) 
on game-theoretic probability.

The organization of the paper is as follows.  In Section
\ref{sec:formulation}  we give a linear programming formulation
of upper hedging prices in multinomial models and state some basic facts
in several propositions.  In Section \ref{sec:bounds} we give some simple
bounds for upper hedging prices.  Then in Section 
\ref{sec:pde}
we  show that, as the number of rounds goes to infinity,
the upper hedging price of a European option converges 
to the solution of an additive form of 
the Black-Scholes-Barenblatt equation.
%as the main result of this paper, we derive a
% partial differential equation for the upper hedging price of 
% a European option, when the number of rounds goes to infinity.
% The partial differentail equation is a generalization
% of the heat equation and can be considered as a modified form of the Stefan problem.
In Section \ref{sec:examples} we present  numerical studies on the accuracy of
the partial differential equation and other approximations.
Some concluding remarks are given in Section \ref{sec:remarks}.

\section{Formulation of upper hedging price}
\label{sec:formulation}

In this section we formulate the upper hedging price for a multinomial game
from the viewpoint of linear programming and the game-theoretic
probability of Shafer and Vovk \cite{shafer-vovk}.  
Also we show some known facts on 
upper hedging prices.  

Let $\cX \subset \bbR$ be a finite set containing both
negative and positive elements.  The protocol of the
multinomial game of $N$ rounds with the initial capital of
$\cK_0=\alpha$ is written as follows.

\medskip
\newcommand{\indentone}{\hspace*{7mm}}
\newcommand{\indenttwo}{\hspace*{14mm}}
\noindent
\indentone ${\cal K}_0 =\alpha$\\
\indentone FOR $n=1,2,\dots,N$.\\
\indenttwo Investor announces $M_n\in{\mathbb R}$. \\
\indenttwo Market announces $x_n\in \cX$.\\
\indenttwo  ${\cal K}_n := {\cal K}_{n-1} + M_n x_n$.\\
\indentone END FOR
\bigskip

We call $\cX^N$ the {\em sample space} and 
$\xi=x_1\dots x_N \in \cX^N$ a {\em path} of Market's moves.  
For $1\le n\le N$, $\xi^n=x_1\dots x_n \in \cX^n$ is a partial path.
Investor's strategy $\cP$ is a function specifying
$M_n$ based on $\xi^{n-1}=x_1\dots x_{n-1}$:
\[
 \cP: x_1 \dots x_{n-1} \mapsto M_n.
\]
with some initial value $M_1=\cP(\Box)$, where $\Box$ denotes the initial  empty path.
When Investor adopts $\cP$, his capital at the end of round $n$ is written as
$\alpha + \cK_n^{\cP}$, where
\[
\cK_n^{\cP}= \sum_{i=1}^n \cP(\xi^{i-1})x_i.
\]

We can write the progression of the game in a rooted tree with the root $\Box$.
For $n < N$, $\xi^n$ is an intermediate node 
branching to $\xi^n x_{n+1}$, $x_{n+1}\in \cX$.  The final nodes
are $\xi\in \cX^N$.

We call a function $\payoff:\cX^N\rightarrow \bbR$ a {\em payoff} function or 
a {\it contingent claim}.
The upper hedging price (or simply the upper price) of $f$ is defined as
\[
\bar E(\payoff)=\inf \{ \alpha \mid \exists \cP \ \textrm{s.t.}\ \alpha + \cK_N^\cP (\xi) \ge f(\xi), \forall \xi\in \cX^N \}
\]
and the lower hedging price is  defined as
\begin{equation}
\label{eq:reciprocity}
\underline E (\payoff)= - \bar E(-\payoff).
\end{equation}
The upper hedging price and the lower hedging price are often called seller's price and 
buyer's price, respectively.
A strategy $\cP$ with the initial capital $\alpha$ satisfying
$\alpha + \cK_N^\cP (\xi) \ge f(\xi), \forall \xi\in \cX^N$, is
called a superreplicating strategy for $\payoff$.

The problem of obtaining the upper hedging price 
can be formulated in linear programming. Let $\cX=\{a_1, \dots, a_k\}$.
% For notational 
% simplity consider the case
% of the trinomial model  $k=3$.
For the single step case $N=1$, $\bar E(f)$ is obtained as the following  minimum:
\begin{equation}
\label{eq:single-step}
\alpha =   \begin{pmatrix} 1 & 0 \end{pmatrix} \begin{pmatrix} \alpha \\ M_1 \end{pmatrix} \to \min \qquad
\textrm{s.t.} \ \begin{pmatrix} 1 & a_1 \\ 1 & a_2 \\ \vdots & \vdots \\1 & a_k \end{pmatrix} \begin{pmatrix} \alpha \\ M_1 \end{pmatrix} \ge \begin{pmatrix} 
\payoff(a_1) \\ \payoff(a_2) \\ \vdots \\ \payoff(a_k) \end{pmatrix}.
\end{equation}

For the two-step case $N=2$, Investor can choose his investment at round 2
depending on the Market's move in the first round. Therefore $\bar E(f)$ is written as
the following minimum:
\begin{align*}
&\alpha = \begin{pmatrix} 1 & 0 & 0 & 0 & \ldots &  0 \end{pmatrix} \begin{pmatrix} \alpha \\ M_1  \\ {M_{2}}_{|a_1} \\ M_{2|a_2} \\ \vdots \\ M_{2|a_k}  \end{pmatrix} \to \min \\
&\qquad
\textrm{s.t.\ } 
{\footnotesize \begin{pmatrix}1 & a_1 & a_1 & 0 & 0& \dots & 0\\
1 & a_1  & a_2 &0 &0 &\dots &0 \\ 
\vdots & \vdots & \vdots & \vdots & \vdots& \dots & \vdots\\
1 & a_1 & a_k & 0 & 0& \dots & 0 \\ 
1& a_2 & 0 & a_1 & 0 &\dots &0 \\ 
\vdots & \vdots & \vdots  & \vdots  & \vdots &\dots &\vdots \\
1& a_2 & 0 & a_k & 0 & \dots & 0 \\ 
%1& a_3 & 0 & 0 & a_1\\ 1& a_3 & 0 & 0 & a_2\\ 1& a_3& 0 & 0 & a_3
\vdots & \vdots & \vdots  & \vdots  & \vdots &\dots &\vdots \\
1& a_k & 0 & 0 & 0 & \dots & a_k \\ 
\end{pmatrix} 
\begin{pmatrix} \alpha \\ M_1  \\ {M_{2}}_{|a_1} \\ M_{2|a_2} 
\\ \vdots \\ M_{2|a_k}\end{pmatrix} 
\ge \begin{pmatrix} \payoff(a_1a_1) \\ \payoff(a_1a_2) \\ \vdots \\\payoff(a_1a_k) 
\\ \payoff(a_2a_1) \\ \vdots \\\payoff(a_2a_k) \\ 
\vdots \\
%\payoff(a_3a_1) \\ \payoff(a_3a_2) \\ 
\payoff(a_ka_k)  \end{pmatrix}. }%\tag{3.1}
\end{align*}

For general $N$, the coefficient matrix $A_N: k^N \times (1 + (k^N - 1)/(k-1))$ 
is recursively defined as
\[A_N = \left(\begin{matrix} \bm{1}_{k^N} & \hat{A}_{N-1} \otimes 
\boldsymbol{1}_k
% \begin{pmatrix}1\\ 1\\ \vdots \\ 1\end{pmatrix}  
& \underbrace{ I_k  \otimes \cdots \otimes I_k}_{N-1} \otimes \begin{pmatrix}a_1\\ a_2\\ \vdots \\ a_k \end{pmatrix}\end{matrix}\right),
\]
where $\bm{1}_k$ is a $k$-dimensional vector of 1's, 
$\hat{A}_N$ is $A_N$ without  the first column and
$\otimes$ denotes the Kronecker product.  
Then 
\begin{equation}
\label{eq:multi-step-lp}
\bar E(f)=\min (1 \ 0 \ \dots \  0)\;\bm{m}, 
\quad \textrm{s.t.}\quad A_N \bm{m} \ge \bm{f},
\end{equation}
where $\bm{f}$ is a $k^N$-dimensional vector consisting of $f(\xi), \xi\in \cX^N$, and
\[
\bm{m}^\top=(\alpha\  M_1\  M_{2|a_1}\  \dots\  M_{2|a_k}\  \dots\  M_{n\mid \xi^{n-1}}\  \dots\ 
M_{N|a_k \dots a_k}).
\]

Since the size of
$A_N$ grows exponentially with $N$, it becomes difficult to directly solve
the linear programming problem for a general path-dependent contingent claim $\payoff$.
Exploiting the recursive structure of the coefficient matrix $A_N$, the
linear programming problem can be solved by backward induction, i.e.\
by solving the single-step optimizations for $N, N-1, \dots, 1$.
This will be explicitly described in \eqref{eq:backward-step} below.
Therefore the single-step optimization in
\eqref{eq:single-step} is essential.
However even with backward induction, for a general path-dependent $f$,
the number of  single-step optimizations grows exponentially with $N$.
This is because the number of nodes of the game tree grows exponentially
with $N$ and  the single-step optimization for the backward induction  is
performed at each node of the game tree.  
% Hence for general path-dependent $f$,
% the computation becomes difficult for large $N$.  
This difficulty is somewhat mitigated in the case of a European option $f$, where
$f$ is a function of $S_N=x_1 + \dots +x_N$ only.  Then the game tree can
be collapsed according to the values of $S_n = x_1 + \dots + x_n$ and
the number of nodes of the collapsed tree grows only polynomially with $N$.
In fact, for generic values of $a_1,\dots, a_k$, the
number of values taken by  $S_N$ is $\binom{N+k-1}{k-1}$.

Now we consider the single-step optimization in
\eqref{eq:single-step}.
Let $\cX$ contain $l$ negative elements and $m=k-l$ positive elements,
which are ordered as
\[
0 > a^-_1>a^-_2> \dots>a^-_l, \qquad 
0 \le  a^+_1<a^+_2< \dots<a^+_m. 
\]
Note that we allow the case $a_+^1=0$, which needs some special consideration.
For the single-step game the following result is given in Proposition 4.1.1 of
\cite{musiela-rutkowski-1st-ed}.  We give our own proof based 
on consideration of dual linear programming problem to 
\eqref{eq:single-step}.

\begin{prop}
\label{prop:single-step}
The upper hedging price of $\payoff$ in the single-step game $N=1$ is given by
\begin{equation}
\label{eq:one-step-upper}
\bar E(\payoff) = \max_{i,j} \left(\frac{a^+_j\payoff(a^-_i) -a^-_i\payoff(a^+_j)}{a^+_j - a^-_i}  \right)  .
\end{equation}
\end{prop}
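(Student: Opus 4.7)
The plan is to invoke linear programming duality on the primal in \eqref{eq:single-step} and then exploit the simple vertex structure of the dual feasible polytope.

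First I would write down the dual of \eqref{eq:single-step}. Introducing a multiplier $p_i \ge 0$ for each inequality $\alpha + M_1 a_i \ge \payoff(a_i)$, the dual becomes
\[
\max \sum_{i=1}^k p_i \payoff(a_i) \quad \text{s.t.} \quad \sum_{i=1}^k p_i = 1, \ \sum_{i=1}^k p_i a_i = 0, \ p_i \ge 0,
\]
where the two equality constraints come from the unconstrained primal variables $\alpha$ and $M_1$. The dual feasible set $\Delta_0$ is the set of probability measures on $\cX$ with mean zero, i.e.\ the set of single-step risk-neutral measures. Both the primal and the dual are feasible: the primal because any sufficiently large $\alpha$ with $M_1=0$ works, and the dual because $\cX$ contains both negative and positive elements, so one can place mass on one negative and one positive point to realize mean zero. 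Hence strong duality applies and $\bar E(\payoff)$ equals the dual optimum.

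Next I would identify the vertices of $\Delta_0$. Since $\Delta_0 \subset \bbR^k$ is defined by two equality constraints ($\sum p_i = 1$ and $\sum p_i a_i = 0$) together with the sign constraints $p_i \ge 0$, every vertex has support of size at most two. A vertex supported on a single point $a_i$ forces $a_i=0$, in which case the dual value is $\payoff(0)$. A vertex supported on two points with mean zero must combine one strictly negative $a^-_i$ and one strictly positive $a^+_j$; solving $p^- + p^+ = 1$ and $p^- a^-_i + p^+ a^+_j = 0$ gives
\[
p^- = \frac{a^+_j}{a^+_j - a^-_i}, \qquad p^+ = \frac{-a^-_i}{a^+_j - a^-_i},
\]
both positive, so this is indeed a feasible vertex. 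Its objective value is precisely
\[
\frac{a^+_j\payoff(a^-_i) - a^-_i\payoff(a^+_j)}{a^+_j - a^-_i}.
\]
Since a linear objective over a polytope attains its maximum at a vertex, taking the maximum over $i,j$ yields the claimed formula.

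The only mild obstacle is the degenerate case $a^+_1 = 0$, which permits the one-point vertex $\delta_0$. Here I would simply observe that plugging $a^+_j = 0$ into the displayed expression still gives $\payoff(0)$, so the one-point vertex is absorbed into the formula and the max over all $(i,j)$ still dominates. A final small remark is needed to confirm that the dual optimum is finite (which follows once we know primal feasibility and the dual feasible set is a bounded polytope, being contained in the simplex), so that strong duality indeed gives equality rather than an unbounded-versus-infeasible mismatch.
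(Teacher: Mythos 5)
Your proof is correct and follows essentially the same route as the paper: you dualize \eqref{eq:single-step} to the LP over mean-zero probability vectors on $\cX$, observe that the optimum is attained at a basic solution (vertex) supported on at most two points, compute the two-point solutions to get the displayed ratio, and absorb the degenerate case $a_1^+=0$ by noting the formula reduces to $\payoff(0)$. The only difference is that you make explicit the primal/dual feasibility and boundedness needed for strong duality, which the paper's proof leaves implicit.
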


\begin{proof}
Let $\bm{p}=(p^-_1\; \dots\; p^-_l\; p^+_1 \;\dots\; p^+_m)^\top$
and consider the following dual problem to \eqref{eq:single-step}: 
\begin{align}
&(\payoff(a^-_1)\; \dots\; \payoff(a^-_l)\;  \payoff(a^+_1) \; \dots \;\payoff(a^+_m) ) \;\bm{p} \to \max 
\nonumber \\
& \qquad \textrm{s.t.}  \quad 
\begin{pmatrix} 1 &  \cdots & 1 &1& \cdots & 1  \\ a^-_1 & \cdots & a^-_l
 & a^+_1 & \cdots & a^+_m \end{pmatrix} 
\bm{p}
%\begin{pmatrix} p_{a^-_1}  \\ \vdots \\p_{a^-_l} \\ p_{a^+_1} \\   \vdots \\p_{a^+_m} \end{pmatrix}
 = \begin{pmatrix} 1 \\ 0  \end{pmatrix}, \quad \bm{p} \ge \bm{0}.
% (\boldsymbol{p} = \begin{pmatrix} p_{a^-_1}  \\ \vdots \\p_{a^-_l} \\ p_{a^+_1} \\   \vdots \\p_{a^+_m} \end{pmatrix} \ge \boldsymbol{0})
\label{eq:risk-neutral-single-step}
\end{align}
Note that the coefficient matrix is $2\times k$.  Therefore
the maximum is attained by a basic solution involving two variables.

First consider the case $a_1^+ > 0$.
If we choose two variables either from 
$\{p^-_1, p^-_2, \dots,p^-_l\}$ 
or from 
$\{p^+_1, p^+_2, \dots,p^+_m\}$, then the solution
does not satisfy $\bm{p} \ge \bm{0}$ and it is infeasible.
% except for the
% case that $a_+-1=0$ and $p_1+=1$. Except for this case,
Therefore for a feasible basic solution we need to choose one variable $p^-_i$ from 
$\{p^-_1,p^-_2, \dots,p^-_l \}$ and another variable $p^+_j$ 
from  $\{p^+_1, p^+_2, \dots,p^+_m \}$.  Then the basic solution is given by
\begin{equation}
\label{eq:basic-solution}
p^-_i = p^-_{i;j} = \frac{a_j^+}{a_j^+ - a_i^-}, \ \  p^+_j = p^+_{j;i} = \frac{-a_i^+}{a_j^+ - a_i^+}
\end{equation}
and the value of the objective function is given by 
$(a^+_j\payoff(a^-_i) -a^-_i\payoff(a^+_j))/(a^+_j - a^-_i)$.

Now consider the case $a_1^+=0$.  Then for $j=1$ we have
$(a^+_j\payoff(a^-_i) -a^-_i\payoff(a^+_j))/(a^+_j - a^-_i)=f(0)$, which is
consistent with \eqref{eq:risk-neutral-single-step} and
\eqref{eq:basic-solution}.  Therefore this case does not require a separate statement.
This proves the proposition.
% Since the
% optimum value is attained by a basic solution, we have
% (\ref{eq:one-step-upper}).
\end{proof}

By \eqref{eq:reciprocity} the lower hedging price $\underline E(f)$ for the single-step game is
given as
\[
\underline E(\payoff) = \min_{i,j} \left(\frac{a^+_j\payoff(a^-_i) -a^-_i\payoff(a^+_j)}{a^+_j - a^-_i}  \right)  .
\]

The vector $\bm{p}$ satisfying 
\eqref{eq:risk-neutral-single-step}  is a probability vector over $\cX$
such that its expectation is zero:
\[
\sum_{i=1}^l a^-_i p^-_i + \sum_{j=1}^m a^+_j p^+_j=0.
\]
Let $\cP_1$ denote the set of probability vectors satisfying (\ref{eq:risk-neutral-single-step}).
$\bm{p}\in \cP_1$  is called a risk neutral measure on $\cX$.
{}From  (\ref{eq:risk-neutral-single-step})  we have
$\bar E(\payoff)=\max_{\bm{p}\in \cP_1} E^{\bm{p}}[\payoff]$, 
where $E^{\bm{p}}[\cdot]$ denotes the usual expectation under $\bm{p}$.
% the upper hedging price
% is the maximum of the expected valut of the payoff over the 
% set of risk neutral measures.
This duality is well known in great generality (see Chapter 5 of \cite{karatzas-shreve}).
However the explicit expression of the upper hedging price in  (\ref{eq:one-step-upper})
is useful for the purpose of numerical backward  induction for $N \ge 2$.

% For notational convenience we denote the two-point probability distribution 
% in \eqref{eq:basic-solution} by $\bm{p}^{i,j}$. Then we can write
% \[
% \bar E(f)=\max_{i,j} E^{\bm{p}^{i,j}}(f), \quad
% \underline E(f)=\min_{i,j} E^{\bm{p}^{i,j}}(f).
% \]

\begin{rem}
\label{rem:multistep-expectation}
We can also consider the dual problem to (\ref{eq:multi-step-lp}) for the $N$-step problem.
It is easy to see that the set $\cP_N$ of risk neutral measures $\bm{p}=\{ p(\xi), \xi \in \cX^n \}$  is characterized as follows:
\begin{align}
\bm{p} \in \cP_N  \ &\Leftrightarrow \ 
p(x_1 \dots x_N)=p_1(x_1) \times p_2(x_2 |x_1)\times \cdots \times p_N(x_N | \xi^{N-1}),\label{eq:N-iff} \\
&\qquad {\rm s.t.}\ 
p_j(\cdot \mid  \xi^{j-1})\in \cP_1, \ j=1,\dots,n, \ \forall \xi^{j-1}\in \cX^{j-1}.
\nonumber
\end{align}
Also by duality we have $\bar E(\payoff)=\max_{\bm{p}\in \cP_N} E^{\bm{p}}[f]$.
Note that \eqref{eq:N-iff} holds even if 
$p(\xi)=0$ for some $\xi$ and some conditional probabilities on the right-hand side
are not defined.  At each node of the game tree, 
the conditional distribution of 
the extremal risk neutral measure corresponding to $\bar E(\payoff)$
is given by a basic solution of the form \eqref{eq:one-step-upper}.
\end{rem}

Based on (\ref{eq:one-step-upper}), the backward induction for obtaining $\bar E(\payoff)$
for the $N$-step game is described as follows.  
Define $\bar\payoff(\cdot, N-n): \cX^n \rightarrow \bbR$, $n=N, N-1, \dots, 0$, by 
\begin{equation}
\label{eq:backward-step}
\bar \payoff(\xi^n, N-n)= 
\max_{i,j} \left(\frac{a^+_j \bar\payoff(\xi^n a^-_i,N-n-1) 
  -a^-_i\bar\payoff(\xi^n a^+_j, N-n-1)}{a^+_j - a^-_i}  \right)
\end{equation}
with the initial condition $\bar\payoff(\xi^N,0)=\payoff(\xi^N)$, $\xi^N \in \cX^N$. 
Summarizing the arguments above we have the following proposition.

\begin{prop}
\label{prop:backward-induction}
The upper hedging price of $\payoff$ in the $N$-step game is given by
\[
\bar E(\payoff) = \bar\payoff(\Box,N).
\]
\end{prop}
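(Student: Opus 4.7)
My plan is to prove the proposition by backward induction on the number of remaining rounds. For each node $\xi^n \in \cX^n$ of the game tree, let $\bar E_{\xi^n}(\payoff)$ denote the upper hedging price of the subgame consisting of rounds $n+1,\ldots,N$ starting from $\xi^n$, with the payoff taken to be the restriction of $\payoff$ to paths extending $\xi^n$. I will show by induction on the number of remaining rounds $N-n$ that
\[
\bar E_{\xi^n}(\payoff) = \bar\payoff(\xi^n, N-n).
\]
Specializing to $n=0$ and $\xi^0 = \Box$ then gives the proposition.

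The base case $N-n=0$ is immediate: with no rounds left to play, a superreplicating strategy requires only $\alpha \ge \payoff(\xi^N)$, and by definition $\bar\payoff(\xi^N,0)=\payoff(\xi^N)$. For the inductive step, suppose the equality holds at all nodes $\xi^{n+1}$ at depth $n+1$. At the node $\xi^n$, Investor first announces $M_{n+1}$, Market replies with $x_{n+1}\in\cX$, and the game continues from $\xi^n x_{n+1}$ with reduced starting capital $\alpha + M_{n+1} x_{n+1}$. A strategy with initial capital $\alpha$ superreplicates $\payoff$ on the subgame at $\xi^n$ if and only if there exists an $M_{n+1}\in\bbR$ such that, for every $x \in \cX$, the residual capital $\alpha + M_{n+1} x$ superreplicates $\payoff$ on the subgame at $\xi^n x$; by the induction hypothesis, this is the condition $\alpha + M_{n+1} x \ge \bar\payoff(\xi^n x, N-n-1)$ for all $x\in\cX$. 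Hence
\[
\bar E_{\xi^n}(\payoff) = \inf\{\alpha : \exists M_{n+1}\in\bbR,\ \alpha + M_{n+1} x \ge \bar\payoff(\xi^n x, N-n-1)\ \forall x\in\cX\},
\]
which is precisely the single-step upper hedging price of the payoff $g(x):=\bar\payoff(\xi^n x, N-n-1)$ on $\cX$. Applying Proposition \ref{prop:single-step} to $g$ yields exactly the right-hand side of \eqref{eq:backward-step}, completing the induction.

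The main obstacle, modest though it is, lies in making the ``pasting'' direction rigorous: one must observe that Investor's strategy on the subgame at $\xi^n$ decomposes into the choice of $M_{n+1}$ together with, for each $x_{n+1}$, a strategy on the subgame at $\xi^n x_{n+1}$, and conversely that such a decomposition reassembles into a legitimate strategy in the sense of the protocol in Section \ref{sec:formulation}. One also needs the infimum in the definition of $\bar E$ to be attained, so that ``superreplicating strategy exists'' is equivalent to ``$\alpha \ge \bar\payoff(\xi^n,N-n)$''; this attainment is guaranteed by the linear programming formulation \eqref{eq:single-step}, whose feasible region is a nonempty polyhedron and whose optimum is achieved at a basic solution, as already exploited in the proof of Proposition \ref{prop:single-step}.
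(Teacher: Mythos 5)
Your proposal is correct and takes essentially the same route as the paper: backward induction that reduces the $N$-step problem to iterated single-step optimizations, each solved by Proposition \ref{prop:single-step}. The paper asserts this reduction by appealing to the recursive structure of the LP coefficient matrix $A_N$ and states the proposition as a summary, whereas you spell out the strategy-pasting at each node and the attainment of the infimum (via the single-step LP) that the paper leaves implicit.
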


Consider the case that the payoff $\payoff$ is defined on the whole $\bbR^N$.
Then we have the following result.

\begin{prop}
\label{prop:convex}
Suppose that $\payoff$ is convex.  Then 
$\bar E(\payoff)$  is given by the binomial model
% and the lower hedging price are given by binomial models with
% the risk neutral measure 
concentrated on two outermost values $\{ a_l^-, a_m^+\}$.
For the case  $a_1^+>0$ $\underline E(f)$ 
is given by the binomial model concentrated 
on two innermost values $\{a_1^-, a_1^+\}$.
For the case $a_1^+=0$  $\underline E(f)=f(0)$.
\end{prop}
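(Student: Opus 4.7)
The plan is to combine the single-step formula \eqref{eq:one-step-upper} with Proposition \ref{prop:backward-induction}, reducing everything to showing that convexity propagates through the recursion \eqref{eq:backward-step}. Write
\[
g(a,b)=\frac{b\,h(a)-a\,h(b)}{b-a},\qquad a<0<b,
\]
which is the value at $x=0$ of the chord joining $(a,h(a))$ and $(b,h(b))$. The key single-step lemma is: for convex $h:\bbR\to\bbR$,
\[
\max_{i,j}g(a_i^-,a_j^+)=g(a_l^-,a_m^+),\qquad \min_{i,j}g(a_i^-,a_j^+)=g(a_1^-,a_1^+),
\]
and in the degenerate case $a_1^+=0$ the min specialises to $h(0)$. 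This is the three-slope lemma read off geometrically: at fixed $b>0$, decreasing $a$ decreases the chord slope and hence raises the $y$-intercept, so $g$ is decreasing in $a$; symmetrically $g$ is increasing in $b$. When $a_1^+=0$ one reads $g(a_i^-,0)=h(0)$ directly, while convexity gives $g(a_i^-,a_j^+)\ge h(0)$ for every $j\ge 2$, so $h(0)$ is the single-step minimum.

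Next I run the backward induction. Define $\tilde V_n:\bbR^n\to\bbR$ by $\tilde V_N=f$ and
\[
\tilde V_n(y)=p_l^-\,\tilde V_{n+1}(y,a_l^-)+p_m^+\,\tilde V_{n+1}(y,a_m^+),
\]
with the two-point risk-neutral weights $p_l^-=a_m^+/(a_m^+-a_l^-)$ and $p_m^+=-a_l^-/(a_m^+-a_l^-)$ from \eqref{eq:basic-solution}. I prove jointly, by downward induction on $n$, that (i) $\tilde V_n$ is convex on $\bbR^n$, and (ii) $\tilde V_n|_{\cX^n}=\bar f(\cdot,N-n)$. For (i), note $p_l^-,p_m^+>0$ and $p_l^-+p_m^+=1$, so $\tilde V_n$ is a convex combination of two slices of the convex function $\tilde V_{n+1}$, each of which is convex in $y$. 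For (ii), use (i) at level $n+1$: the slice $a\mapsto\tilde V_{n+1}(\xi^n,a)$ is convex, so the single-step lemma applies to the max in \eqref{eq:backward-step} and selects $(i,j)=(l,m)$, reproducing exactly the defining formula for $\tilde V_n$. Taking $n=0$ and applying Proposition \ref{prop:backward-induction} gives $\bar E(f)=\bar f(\Box,N)=\tilde V_0(\Box)$, which is precisely the expectation of $f$ under the product binomial measure putting mass $p_l^-$ on $a_l^-$ and $p_m^+$ on $a_m^+$ at every round.

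The lower hedging price is handled by the identical argument with $\min$ in place of $\max$, now pinned at the innermost pair $(a_1^-,a_1^+)$. When $a_1^+=0$ the recursion degenerates to $\underline V_n(\xi^n)=\underline V_{n+1}(\xi^n,0)$, and iterating yields $\underline E(f)=f(0,\dots,0)$, the value abbreviated $f(0)$ in the statement. The only substantive step is the preservation of joint convexity under the backward recursion; once that is in hand, the choice of extremal pair is the same at every node of the game tree, and the binomial reduction is immediate.
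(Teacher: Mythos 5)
Your proof is correct and takes essentially the same route as the paper: a single-step lemma showing that for a convex payoff the maximum (resp.\ minimum) of the chord-intercepts \eqref{eq:one-step-upper} is attained at the outermost (resp.\ innermost) pair, followed by backward induction through \eqref{eq:backward-step} using the fact that convexity is preserved because the binomial weights are path-independent. The only difference is bookkeeping: you maintain joint convexity of the value function on $\bbR^n$ step by step, whereas the paper notes directly that $\bar \payoff(\xi^n, N-n)$ is a convex combination of values of $\payoff$ with weights that do not depend on the path, which is the same observation in unrolled form.
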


This result was proved by R\"uschendorf (\cite{ruschendorf}) 
by a convex ordering argument in a more general setting, 
where $\cX$ is a bounded interval.  See also \cite{courtois-denuit}.
Corresponding results for concave $\payoff$ are obtained by using
\eqref{eq:reciprocity}.
We now give a simple proof of this proposition using \eqref{eq:backward-step}.

\begin{proof}
First consider the one-step game $N=1$. By convexity of $\payoff$ it is easy to check
that for any $a_i^-$ and $a^+_j$ the following inequalities hold:
% $a^-_i = (1-\lambda_i)a^-_l + \lambda_i a^+_m$ , $a^+_j = (1-\lambda_j)a^-_l + \lambda_j a^+_m$
\[
\frac{a^+_m\payoff(a^-_l) -a^-_l\payoff(a^+_m)}{a^+_m - a^-_l}  \ge
\frac{a^+_j\payoff(a^-_i) -a^-_i\payoff(a^+_j)}{a^+_j - a^-_i}\\
%  &  
% &=\frac{\{(1-\lambda_j)a^-_l + \lambda_j a^+_m \}\payoff((1-\lambda_i)a^-_l + \lambda_i a^+_m) + \{(1-\lambda_i)a^-_l + \lambda_i a^+_m\}\payoff((1-\lambda_j)a^-_l + 
% \lambda_j a^+_m)} {(1-\lambda_j)a^-_l + \lambda_j a^+_m -(1-\lambda_i)a^-_l + \lambda_i a^+_m} 
% \\ 
\ge 
\begin{cases}
\frac{a^+_1\payoff(a^-_1) -a^-_1\payoff(a^+_1)}{a^+_1 - a^-_1} & \text{if}\  a_1^+ > 0, \\
\payoff(0) & \text{if}\ a_1^+=0.
\end{cases} 
\]
Hence the proposition holds for the single-step game. 

For $N>1$ we can use the induction.
We note that $\bar\payoff(\xi^n,N-n)$
is a convex combination of $f(\xi^n x_{n+1}\dots x_N)$, $x_{n+1}\dots x_N \in \cX^{N-n}$,
where the weights of the combination are given by the binomial model and they do
not depend on $\xi^n$.  In particular $\bar\payoff(\xi^n,N-n)$ is convex in $x_n$. 
% Then by backward induction the proposition holds
% for general $N$.
\end{proof}

\begin{rem}
As pointed out in Remark 2 of \cite{ruschendorf}, the above result holds if $f$ is
component-wise convex in every $x_i$, $i=1,\dots,N$.
\end{rem}

% \[
% \bar{\textrm{E}}(Y) = \max_{i,j} \left(\frac{a^+_j\payoff(a^-_i) -a^-_i\payoff(a^+_j)}{a^+_j - a^-_i}  \right)  = \frac{a^+_m\payoff(a^-_l) -a^-_l\payoff(a^+_m)}{a^+_m - a^-_l}
%  \]

% By Proposition \ref{prop:convex}
% The expremal risk neutral measures corresponding for the $N$-step problem
% in Remark  \ref{rem:multistep-expectation}
% to $\bar E(\payoff)$ or $\underline E(\payoff)$ are concentrated
% on $\{a_l^-, a_m^+\}$ or $\{a_1^-, a_1^+_}$, 

\section{Some bounds for upper hedging prices}
\label{sec:bounds}

In this section we present some simple inequalities for
upper hedging prices.

We first investigate the relation between binomial model and
general multinomial model.
Pick a negative element $a_i^-$ and
a positive element $a_j^+$ from $\cX$ and restrict 
the move space $\cX$ of Market to the two element set 
$\cX_{i,j}=\{ a_i^-, a_j^+\}$.   Then the multinomial game is reduced to
a binomial model, where the price of any contingent claim is
given by an arbitrage argument.  
% As shown in standard
% textbooks, under a  binomial model, 
% the price of $\payoff$ is calculated as the expected value
% of $\payoff$ 
% under the unique risk neutral measure, 
% where  $x_1, \dots, x_n$ are independent and
% has the distribution in \eqref{eq:basic-solution}.
Let $E_{i,j}(\payoff)$ denote the price of $\payoff$ 
under the binomial model $\cX_{i,j}$.
For the two-element set $\cX_{i,j}$ there is no need to 
take the maximum in \eqref{eq:backward-step}.
% and it gives
% the backward induction for determining 
% $E_{i,j}(\payoff)$.  
On the other hand, for a multinomial model
$\cX$ we take the maximum in \eqref{eq:backward-step} at each node of the game.
This gives the following lower bound for $\bar E(\payoff)$:
\begin{equation}
\label{eq:ineq1}
\max_{i,j} E_{i,j}(\payoff) \le \bar E(\payoff).
\end{equation}
Similarly for the lower hedging price we have
\[
\min_{i,j} E_{i,j}(\payoff) \ge \underline E(\payoff).
\]

The above inequalities can be generalized by considering nested 
move spaces of Market.  Explicitly writing $\cX$ in the multinomial
game, let $\bar E(\payoff \mid \cX)$ and $\underline E(\payoff \mid \cX)$
denote the upper hedging price and the lower hedging price of
multinomial game with $\cX$ as Market's move space.
Consider two nested move spaces $\cX \subset \cX'$ of Market.
Then the same consideration as above gives the following inequalities:
\begin{equation}
\label{eq:nested-ineq}
\underline E(\payoff \mid \cX') \le
\underline E(\payoff \mid \cX) \le
\bar E(\payoff \mid \cX) \le
\bar E(\payoff \mid \cX').
\end{equation}
In Section 
\ref{sec:examples} we compare upper and lower hedging prices
in  trinomial and quadnomial models.

We can also consider dynamic restrictions of the move space of Market.
For example, we can consider 
the maximization in \eqref{eq:backward-step} only for even $n=2h$
and use the maximizer $i^*, j^*$ from this round 
for the subsequent round $n=2h+1$.  Or we can maximize e.g.\ 
every 10th step. 
By this dynamic restriction we again have a lower bound for the
upper hedging price.   As we discuss in the next section, 
this pruning of maximizations is conceptually close to discretization of
partial differential equation in \eqref{eq:main-pde}.

We now present another bound when $\payoff$ is a European option
depending only on $S_N$. Let $\payoff$ be defined on the whole 
$\bbR$.  Assume that $\payoff$ has the first derivative
$\payoff'$ which is of bounded variation.  Then $\payoff'$ is
written as a difference of two non-decreasing functions 
(Section 5.2 of \cite{royden}).
% When $\payoff$ is written as a sum of a convex and a concave function,
% then the upper price is bounded from above.
By taking the indefinite integral of $\payoff'$ we see that
$\payoff$ is written as a sum of 
a convex function and a concave function:
\[
\payoff = \payoff_1 + \payoff_2,  \qquad
\payoff_1: \text{convex}, \ 
\payoff_2: \text{concave}.
\]
By the subadditivity of the upper hedging price (Section 8.3 of
\cite{shafer-vovk}), we have
\begin{equation}
\label{eq:convex-concave-sum}
\bar E(\payoff) \le \bar E(\payoff_1) + \bar E(\payoff_2).
\end{equation}

The bounds in \eqref{eq:ineq1} and \eqref{eq:convex-concave-sum} are very simple.
Unfortunately as seen from numerical examples in Section
\ref{sec:examples} these bounds are generally not very tight.

\section{Limiting behavior of upper hedging price of an European option}
\label{sec:pde}

In this section we derive the limit of the upper hedging price
of a European option as $N\rightarrow\infty$ in an appropriate
sequence of games.  Let $\funcsp$ denote the space of functions $\bbR \rightarrow \bbR$
with compact support and continuous second derivatives.  
%with bounded derivatives  of all orders.
% up to the third order.
Let $f\in \funcsp$.
We consider a sequence of multinomial
games with $N$ rounds, where the payoff $\payoff_N$ for the $N$-th game is
given as
\begin{equation}
\label{eq:series-of-games}
\payoff_N(\xi^N)= \payoff(S_N/\sqrt{N}), \quad S_N = x_1 + \dots + x_N.
\end{equation}

Note that the expected value under the two-point distribution  in
\eqref{eq:basic-solution} is zero and the variance is given as
\[
  (a_i^-)^2 p^-_{i;j} + (a_j^+)^2 p_{j;i} = \frac{a_j^+ (a_i^-)^2 - a_i^- (a_j^+)^2}{a_j^+ - a_i^-}
= -a_i^- a_j^+.
\]
In view of this define the maximum variance and the minimum variance of $\cP_1$ as
\[
%\bar \sigma^2 = - a_l^- a_m^+, \ \ > \ \  \underline \sigma^2 = -a_1^- a_1^+.
\bar \sigma^2 = - a_l^- a_m^+ \ \ > \ \  \underline \sigma^2 = -a_1^- a_1^+.
\]
We now state the following theorem.

\begin{theorem} 
\label{thm:bsb}
Let $\payoff\in \funcsp$ and let  $\payoff_N$ be defined by
(\ref{eq:series-of-games}).  Assume $\underline \sigma^2 > 0$. Then
\label{thm:main}
\[\lim_{N\rightarrow\infty}\bar E(\payoff_N)= \bar \payoffc(0,1),
\]
where $\bar \payoffc(s,t)$, $s\in \bbR$, $0\le t\le 1$, 
satisfies the following partial differential equation
\begin{equation}
\label{eq:main-pde}
\frac{\partial}{\partial t} \bar \payoffc(s,t)= \frac{\tilde \sigma^2}{2}
\frac{\partial^2}{\partial s^2} \bar\payoffc(s,t), \qquad 
\begin{cases}
\tilde \sigma^2=\bar\sigma^2, &  {\rm if}\  
\frac{\partial^2}{\partial s^2} \bar\payoffc(s,t)\ge 0 ,\\
\tilde \sigma^2=\underline\sigma^2,  &  {\rm if}\ 
\frac{\partial^2}{\partial s^2} \bar\payoffc(s,t)< 0,
\end{cases}
\end{equation}
with the boundary condition $\bar\payoffc(s,0)=\payoff(s), s\in \bbR$.
\end{theorem}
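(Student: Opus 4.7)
The plan is to derive the BSB equation as the continuum limit of the backward-induction scheme in Proposition~\ref{prop:backward-induction} and then to establish convergence rigorously via viscosity-solution techniques.  Because $\payoff_N$ depends on $\xi^N$ only through $S_N$, induction on \eqref{eq:backward-step} shows that $\bar\payoff_N(\xi^n, N-n)$ depends on $\xi^n$ only through $S_n$.  Writing the common value as $u_N(S_n/\sqrt N, N-n)$, the backward step takes the form
\[
u_N(s, m) = \max_{i,j}\Big[\,p^-_{i;j}\, u_N\big(s + a_i^-/\sqrt N,\, m-1\big) + p^+_{j;i}\, u_N\big(s + a_j^+/\sqrt N,\, m-1\big)\Big],
\]
with $u_N(s, 0) = \payoff(s)$ on the appropriate lattice and $\bar E(\payoff_N) = u_N(0, N)$.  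Setting $t = m/N$, so that each backward step advances the ``remaining time'' by $1/N$, the goal is to show $u_N(0, N) \to \bar\payoffc(0, 1)$.

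For the consistency step I would plug a smooth test function $\psi(s, t)$ into the one-step operator and exploit that the two-point basic solution in \eqref{eq:basic-solution} has mean zero and variance $-a_i^- a_j^+$.  Taylor expanding to second order in $s$ and first order in $t$, the first-order $s$-term cancels and one obtains
\[
p^-_{i;j}\psi\big(s + \tfrac{a_i^-}{\sqrt N},\, t - \tfrac{1}{N}\big) + p^+_{j;i}\psi\big(s + \tfrac{a_j^+}{\sqrt N},\, t - \tfrac{1}{N}\big) = \psi(s, t) - \tfrac{1}{N}\partial_t\psi + \tfrac{-a_i^- a_j^+}{2N}\partial_s^2\psi + O(N^{-3/2}).
\]
Maximizing over $(i, j)$ selects $-a_i^- a_j^+ = \bar\sigma^2$ at the outermost pair $(i,j) = (l, m)$ when $\partial_s^2\psi > 0$ and $-a_i^- a_j^+ = \underline\sigma^2$ at the innermost pair $(i,j) = (1, 1)$ when $\partial_s^2\psi < 0$, reproducing exactly the right-hand side of \eqref{eq:main-pde} up to remainder $o(1/N)$.

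The convergence itself would proceed via the Barles--Souganidis monotone-scheme framework.  Monotonicity is immediate because each backward step is a \emph{convex} combination with nonnegative weights summing to one; boundedness of $\payoff$ propagates through the max-of-averages structure to give $\|u_N\|_\infty \le \|\payoff\|_\infty$, hence stability.  Combined with the consistency above, the upper and lower half-relaxed limits of $(s, t) \mapsto u_N(s, \lfloor tN\rfloor)$ are respectively a viscosity subsolution and supersolution of \eqref{eq:main-pde} with initial data $\payoff$; a comparison principle for the BSB equation --- available under the nondegeneracy hypothesis $\underline\sigma^2 > 0$ (see Chapter~4 of~\cite{pham}) --- then pinches them together at the unique bounded viscosity solution $\bar\payoffc$, giving $\bar E(\payoff_N) = u_N(0, N) \to \bar\payoffc(0, 1)$.  \textbf{The main obstacle} is that the BSB operator is fully nonlinear with a discontinuous diffusion coefficient, so $\bar\payoffc$ generically fails to be $C^{1,2}$ and any direct pointwise comparison between $u_N$ and $\bar\payoffc$ is unavailable; routing the argument through viscosity solutions and invoking the comparison principle (which is precisely what $\underline\sigma^2 > 0$ buys us) is the substantive step.
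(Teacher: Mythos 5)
Your proposal is correct in outline but takes a genuinely different route from the paper. The paper's proof works with the \emph{classical} solution: it cites Theorem 4.6.9 of \cite{pham} (and \cite{gozzi-vargiolu}, \cite{vargiolu-2001}) for the fact that under $\underline\sigma^2>0$ the solution $\bar\payoffc$ has continuous $\partial_t$ and $\partial_s^2$, Taylor-expands $\bar\payoffc(S_n,1-n/N)$ along the price path, lets Investor delta-hedge with $M_n=(\partial/\partial s)\bar\payoffc(S_n,D_n)$, and then closes both inequalities probabilistically: $\limsup_N \bar E(\payoff_N)\le\bar\payoffc(0,1)$ by taking expectations under the extremal risk-neutral measure ${\bm p}^*$, under which $\cK_n^{\cP}$ is a martingale and the conditional variance of $\sqrt{N}\,dS_n$ is dominated by (resp.\ dominates) $\tilde\sigma^2$ according to the sign of $\partial_s^2\bar\payoffc$; and $\liminf_N \bar E(\payoff_N)\ge\bar\payoffc(0,1)$ by letting Market randomize according to that extremal measure and invoking a binomial-model central-limit argument. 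Your Barles--Souganidis scheme-convergence argument replaces all of this with monotonicity, stability, consistency and comparison; your consistency computation coincides with the paper's own heuristic expansion \eqref{eq:difference-equation}, and monotonicity and the bound $\|u_N\|_\infty\le\|\payoff\|_\infty$ are indeed immediate from the max-of-convex-combinations structure. What your route buys: it needs no probabilistic or martingale input, and it never uses classical regularity of $\bar\payoffc$, so it would extend essentially unchanged to the degenerate case $\underline\sigma^2=0$, which the paper can only treat informally in Remark \ref{rem:viscosity}. What the paper's route buys: an explicit (approximately) superreplicating strategy with a direct hedging interpretation, and a transparent role for the extremal risk-neutral measures of Remark \ref{rem:multistep-expectation}. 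One inaccuracy in your closing remark: under the standing hypothesis $\underline\sigma^2>0$ the operator $q\mapsto\tfrac12\bigl(\bar\sigma^2 q^{+}-\underline\sigma^2 q^{-}\bigr)$ is convex and uniformly elliptic, so $\bar\payoffc$ \emph{is} $C^{1,2}$ (this is exactly the regularity the paper's proof leans on); your viscosity machinery is valid but not forced on you here. Finally, two steps you gloss over and should spell out: attainment of the initial condition by the half-relaxed limits (standard here, since the scheme is translation invariant in $s$ and hence preserves the modulus of continuity of $\payoff$), and the identification of your unique bounded viscosity solution with the $\bar\payoffc$ of the theorem statement, i.e.\ the observation that the classical solution of \eqref{eq:main-pde} is that viscosity solution.
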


Similarly the following partial differential equation describes the
limiting lower price of $\payoff_N$. 
% we compute the following partial differential equation 
% for the lower prices by the similar difference method:
\begin{equation}
\label{eq:main-pde2}
\frac{\partial}{\partial t} \underline \payoffc(s,t)= \frac{\tilde \sigma^2}{2}
\frac{\partial^2}{\partial s^2} \underline\payoffc(s,t), \qquad 
\begin{cases}
\tilde \sigma^2=\underline\sigma^2, &  {\rm if}\  
\frac{\partial^2}{\partial s^2} \underline\payoffc(s,t)\ge 0 ,\\
\tilde \sigma^2=\bar\sigma^2,  &  {\rm if}\ 
\frac{\partial^2}{\partial s^2} \underline\payoffc(s,t)< 0.
\end{cases}
\end{equation}
%We note that lower prices follow the partial differential equation (\ref{eq:main-pde2}).
We can understand \eqref{eq:main-pde} %and \eqref{eq:main-pde2}
as a piecewise heat equation, where the diffusion coefficient
depends on the convexity or concavity of $\bar\payoffc$.
As pointed out by a referee the equation \eqref{eq:main-pde}
is studied in \cite{ping-shige}.

We stated Theorem \ref{thm:bsb} for the
simple setting of $\underline\sigma^2 > 0$ and $f\in \funcsp$.
Theorem 4.6.9 of \cite{pham} states that the Black-Scholes-Barenblatt
equation holds for a payoff function with linear growth condition:
for some $a,b>0$, 
$|f(s)| \le a + b |s|$, $\forall s\in {\mathbb R}$. 
In view of 
this result we expect our Theorem \ref{thm:bsb}
also holds for continuous 
$f$ satisfying a linear growth condition.  However justifying 
the limiting argument from discrete time to continuous time does not
seem to be simple.

\begin{remark}
\label{rem:viscosity}
The case $\underline\sigma^2=0$ needs a special consideration, although 
Theorem \ref{thm:bsb} still holds for this case.
In view of Theorem 4.6.9 of \cite{pham}, 
the notion of viscosity solution (cf.\  \cite{users-guide}) is needed for  (\ref{eq:main-pde}).

In Section 6.3 of \cite{shafer-vovk} this case was treated using parabolic potential
theory.  The equivalence of (\ref{eq:main-pde}) to the treatment in Section 6.3 of  \cite{shafer-vovk}
is seen by the following intuitive argument.  
If $\underline\sigma^2=0$, then $(\partial/\partial t) \bar \phi(s,t)\ge 0$,
$\forall s,t$, and $\bar\phi$ is non-decreasing in $t$. $\bar\phi$ strictly increases in $t$
at some $(s_0,t_0)$ if and only if $\bar\phi(s,t)$ is strictly convex in $s$ at this point.  Then for all
$t\ge t_0$, $\bar\phi(s,t)$ is (at least weakly) convex in $s$. This implies that
$\tilde\sigma^2 = \bar\sigma^2$ if and only if $\bar\phi(s,t) > f(s)$, which corresponds
to the ``continuous region'' in Section 6.3 of \cite{shafer-vovk}.

The numerical behavior of (\ref{eq:main-pde}) for this case is well illustrated in 
Figure 6.2 of \cite{shafer-vovk}.  It should also be noted that
$\bar\phi(s,\infty)=\lim_{t\rightarrow\infty}\bar\phi(s,t)$  is the least
concave majorant (concave envelope) of $f$.
\end{remark}

%\begin{remark}
%\label{rem:additive-form}
Note that \eqref{eq:main-pde} is an additive form of 
the Black-Scholes-Barenblatt equation in which  the right-hand side
of \eqref{eq:main-pde} multiplied by $s^2$:
\begin{equation}
\label{eq:BSB-orig}
\frac{\partial}{\partial t} \bar \payoffc =  \frac{\tilde \sigma^2}{2}
s^2 \, \frac{\partial^2}{\partial s^2} \bar\payoffc, \qquad 
\begin{cases}
\tilde \sigma^2 = \bar\sigma^2, &  {\rm if}\  
\frac{\partial^2}{\partial s^2} \bar\payoffc\ge 0 ,\\
\tilde \sigma^2=\underline\sigma^2,  &  {\rm if}\ 
\frac{\partial^2 }{\partial s^2} \bar\payoffc< 0.
\end{cases}
\end{equation}
Consider a multiplicative model, where Reality chooses positive $x_n$'s and
$S_N = x_1 \times \dots \times x_N$ is the product of $x_n$'s.  A European
option is of the form $\payoff(S_N)$. 
% can be written as
% \[
% f(S_N)=f(\exp(\log(S_N))) = \tilde f(\log x_1 + \dots +\log x_N), \quad 
% \text{where}\ \ 
% \tilde f=f\circ \exp.
% \]
% Hence treating $\log x_n$ as $x_n$ in our multinomial protocol in
% Section \ref{sec:formulation}, we can also consider the
% multiplicative model. 
In \eqref{eq:multiplicative-sn} below we see that the resulting partial differential equation
is exactly the Black-Scholes-Barenblatt equation.
Note that the multiplicative model is standard in finance literature, although
it is well known that the pioneering work of Bachelier (\cite{bachelier})
was formulated
in the additive form. In this paper we use additive model, because
game-theoretic protocols are usually formulated in an additive form and
also because the limiting partial differential equation is a more
direct generalization of the heat equation.
% If we interpret $S_N = x_1 + \dots + x_N$ as the price of the
% underlying security at time $N$, then our protocol
% is
%\end{remark}

% Note that \eqref{eq:main-pde} corresponds to an It\^o diffusion, where
% the diffusion coefficient $\tilde \sigma=\tilde \sigma(s,t)$ depends on $s$ and $t$
% in a discontinuous way.   
% If $\bar\sigma^2=\underline\sigma^2$ our equation
% reduces to the heat equation and our equation can be 
% considered as  Stefan type problem (e.g.\ \cite{gupta}), where diffusion coefficient
% depends on phases in a medium.

%Also since 
% \[
% \bar\payoffc(s,t)=\payoff(s) + \int_0^t  \frac{\tilde\sigma^2(s,t)}{2}
% \frac{\partial^2}{\partial s^2} \bar\payoffc(s,u)du
% \]
% all mixed derivatives of $\bar\payoffc$ is bounded in $s\in \bbR$ and $t\in[0,1]$.

A rigorous proof of our theorem is somewhat tedious 
and we first give some heuristic arguments as to why 
\eqref{eq:main-pde} should hold.  Later we give a more formal proof, by
considering an approximate superreplicating strategy as in Section 6.2 of
\cite{shafer-vovk}.

For our argument it is more convenient to rescale the move space of Market in the
$N$-th game to
\begin{equation}
\label{eq:N-th-game}
\cX_N = \{ \frac{a_1^-}{\sqrt{N}}, \dots, \frac{a_l^-}{\sqrt{N}}, 
    \frac{a_1^+}{\sqrt{N}}, \dots, \frac{a_m^-}{\sqrt{N}} \}.
\end{equation}
%and redifine $S_N= x_1 + \dots + x_N$.  
After this rescaling,
the backward induction in \eqref{eq:backward-step}  for the $N$-th game is written as
\begin{equation}
\label{eq:backward-sn}
\bar \payoff_N(s , N-n)= 
\max_{i,j} \big(p_{i;j}^- \bar\payoff_N(s + \frac{a^-_i}{\sqrt{N}},N-n-1)  + p_{j;i}^+
\bar\payoff_N(s + \frac{a_j^+}{\sqrt{N}},N-n-1)\big),
\end{equation}
where $s=S_n$ and
$p_{i;j}^-, p_{j;i}^+$ are given by \eqref{eq:basic-solution}.
The initial condition is given by  $\bar\payoff_N(S_N,0)=\payoff(S_N)$.
Note that by backward induction (\ref{eq:backward-sn}) defines $\bar \payoff_N(s,N-n)$ 
for all $s\in \bbR$, since $\payoff$ is defined on the whole $\bbR^1$.
% Also for non-integral $u\in [0,N]$, we can define $\bar \payoff_N(s,u)$
% by linear interpolation of 
% $\bar \payoff_N(s,\lfloor u \rfloor)$ and $\bar\payoff_N(s,\lceil u \rceil)$.
%Then $\bar\payoff$ is defined on $\bbR \times [0,N]$. 

% Under our assumption, the second order and the third order derivatives
% of $f$ are bounded, i.e.\ there exists some $M$ such that
% \[
% |f''(s)|, |f'''(s)| < M,   \quad \forall s\in \bbR.
% \]
% Now
%By Remark \ref{rem:multistep-expectation}, 
As in the proof of 
Proposition \ref{prop:convex}
$\bar\payoff_N(s,N-n)$ is
% the expected value of
% $\payoff(s+S_{N-n})$ under some extremal risk neutral measure $\bm{p}\in \cP_{N-n}$,
% i.e.\ $\bar\payoff_N(s,N-n)$ is
a convex combination of values $\payoff(s+S_{N-n})$.
It should be noted that, 
unlike the case of convex $\payoff$ in Proposition \ref{prop:convex},
the weights of the convex
combination depend on $s$.    However as seen from the proof of
Proposition \ref{prop:convex}, the weights are
concentrated either on the two outermost values $\{ a_l^-, a_m^+\}$
or on the two innermost values $\{a_1^-, a_1^+\}$, 
depending on the convexity of $\bar\payoff_N(s,N-n)$ in $s$.
Hence in each interval of convexity or concavity of $\bar\payoff_N(s,N-n)$,
it is twice continuously differentiable in $s$. 
In our numerical studies we found that if the payoff function $f$ is smooth
and has only finite number of inflection points, then
$\bar \payoff_N(s,N-n)$ as a function of $s$ has no more
inflection points than $\payoff$.
% The follwing is very wrong!
%This implies that 
% for the same $M$
% \[
% |\frac{\partial^2}{\partial s^2} \bar \payoff_N(s,u)|,
% |\frac{\partial^3}{\partial s^3} \bar \payoff_N(s,u)| < M, 
% \quad \forall s\in \bbR, \forall u\in [0,N].
% \]
%as a function of $s$, the partial derivatives of
%$\bar \payoff_N(\cdot ,u)$ up to the third order is
%bounded uniformly in $u$.
%Furthermore the bounds for the derivatives do not depend on $u$
% Nevertheless this dependence is continuous in $s$ and hence 
% $\bar \payoff_N(s,N-n)$ is twice continuously 
% differentiable in $s$. 

Write $\nu=N-n-1$.
Then each term in the right-hand side of \eqref{eq:backward-sn} is expanded as
\begin{align*}
&p_{i;j}^- \bar\payoff_N(s + \frac{a^-_i}{\sqrt{N}},\nu)  + p_{j;i}^+
\bar\payoff_N(s + \frac{a_j^+}{\sqrt{N}},\nu) \\
& \qquad =
p_{i;j}^- \Big(\bar\payoff_N(s,\nu) + \frac{a^-_i}{\sqrt{N}} \bar\payoff_N'(s,\nu) + \frac{(a^-_i)^2}{2N}
\bar\payoff_N''(s+\theta_i\frac{a_i^-}{\sqrt{N}},\nu) \Big)
\\
&\qquad\quad   + 
p_{j;i}^+ \Big(\bar\payoff_N(s,\nu) + \frac{a^+_j}{\sqrt{N}} \bar\payoff_N'(s,\nu) + \frac{(a^+_j)^2}{2N}\bar\payoff_N''(s+\theta_j \frac{a_j^+}{\sqrt{N}},\nu) 
\Big)\\
&\qquad = \bar \payoff_N(s,\nu) + \frac{-a_i^- a_j^+}{2N}  \bar f_N''(s,\nu) + R_N,
\qquad (0< \theta_i, \theta_j < 1),
\end{align*}
where derivatives are with respect $s$  and
$|NR_N|=o(1)$ uniformly in $s$ and $\nu$.
Then \eqref{eq:backward-sn} is written as
% By writing out the inductive step, we see that $\bar f_N(s,N-n)$ is
% a convex combination of the final payoff $\payoff(s+S_{N-n})$, i.e., 
% it is written as the expected value under 
% some extremal risk neutral measure $\bm{p}\in \cP_{N-n}$, 
% \[
% \bar f_N(s,N-n) = E^{\bm{p}}(f_N(s-S_{N-n})).
% \]
\begin{equation}
\label{eq:difference-equation}
N (\bar\payoff_N(s,\nu+1)- \bar \payoff_N(s,\nu))=
\max_{i,j}(-a_i^- a_j^+ \frac{1}{2} \bar f_N''(s,\nu) + NR_N).
\end{equation}
If we ignore $NR_N=o(1)$, the right hand side is maximized
by $(i,j)=(l,m)$ or $(i,j)=(1,1)$ depending on the sign of
$\bar f_N''(s,\nu)$.

Now by rescaling time axis define
\[
\bar\payoffc_N(s,t)=\bar \payoff_N(s, Nt), \quad s\in \bbR, \ t\in [0,1].
\]
Then \eqref{eq:difference-equation} is written as
\begin{equation}
\label{eq:backward2}
\frac{\bar\payoffc_N(s,t+\Delta t)- \bar\payoffc_N(s,t)}{\Delta t}
= \max_{i,j}(-a_i^- a_j^+ \frac{1}{2}\bar \payoffc_N''(s,t) + NR_N), \qquad \Delta t = 1/N.
\end{equation}
This clearly corresponds to \eqref{eq:main-pde}.
However it seems
difficult to let $N\rightarrow\infty$ in 
\eqref{eq:backward2} and prove our theorem directly,
although the finite difference approximation
to HJB equations in Chapter IX of \cite{fleming-soner}
should hold in some form.

At this point we indicate how the Black-Scholes-Barenblatt equation \eqref{eq:BSB-orig}
arises in the multiplicative case.  In the multiplicative multinomial
model, $x_n$ is assumed to be of the form 
$ x_n-1 \in  {\cal X}_N
$,
where ${\cal X}_N$ is given in \eqref{eq:N-th-game}.
Let $S_n=x_1 \times \dots \times x_n$. Then
\[
S_{n+1}=S_n \times x_{n+1} =S_n + S_n \times (x_{n+1}-1).
\]
The expansion of the right-hand side of 
\eqref{eq:backward-sn} in the multiplicative model is
\begin{equation}
\label{eq:multiplicative-sn}
p_{i;j}^- \bar\payoff_N(s + s\frac{a^-_i}{\sqrt{N}},\nu)  + p_{j;i}^+
\bar\payoff_N(s + s\frac{a_j^+}{\sqrt{N}},\nu)
= \bar \payoff_N(s,\nu) + \frac{-a_i^- a_j^+}{2N} s^2\bar f_N''(s,\nu) + R_N.
\end{equation}
This corresponds to \eqref{eq:BSB-orig}.

Instead of the above direct approach, 
knowing that  \eqref{eq:main-pde} should hold, we can construct
an approximate superreplicating strategy of Investor and prove our theorem
as in Section 6.2 of \cite{shafer-vovk}.  In the following proof, 
in order to show the inequality  $\bar \payoffc(0,1) \ge \limsup_N\bar E(f_N)$ we 
adopt a suggestion by a referee.
% In the following proof
% we need to control the remainder term of the Taylor expansion 
% of $\bar\payoffc(s,t)$
% % which we study in our subsequent work
% \cite{matsuo-et-al} in detail.

\begin{proof}[Proof of Theorem \ref{thm:main}]
By Theorem 4.6.9 of \cite{pham}, 
the solution $\bar\phi$ to  \eqref{eq:main-pde} 
has a continuous first-order derivative in $t$ and
a continuous second-order derivative in $s$. 
See also Theorem 11 of \cite{gozzi-vargiolu} and \cite{vargiolu-2001}.
Consider the following sequence
\[
\bar\payoffc(0,1), \bar\payoffc(S_1, \frac{N-1}{N}), \dots,
\bar\payoffc(S_{N-1},\frac{1}{N}),\bar\payoffc(S_N,0),
\]
where $\bar\payoffc(S_N,0)=\payoff(S_N)$.
Writing $dS_n = S_{n+1}-S_n = x_{n+1}$, $D_n=1-n/N$, $dD_n=-1/N$, we can expand the successive difference as
\begin{align}
d \bar\payoffc(S_n, D_n)&=\bar\payoffc(S_{n+1},D_{n+1})-\bar\payoffc(S_n, D_n)
\nonumber \\
&=\frac{\partial}{\partial s} \bar\payoffc(S_n, D_n) dS_n +  \frac{1}{2} \frac{\partial^2}{\partial s^2}\bar\payoffc(S_n, D_n) (dS_n)^2 + \frac{\partial}{\partial t}\bar\payoffc(S_n, D_n) dD_n
+ R_N \nonumber\\
&=\frac{\partial}{\partial s} \bar\payoffc(S_n, D_n) dS_n -  \frac{1}{2N}
\big(\tilde\sigma^2(S_n, D_n) - N (dS_n)^2) \frac{\partial^2}{\partial s^2}\bar\payoffc(S_n, D_n)
+ R_N,
\label{eq:sv-proof}
\end{align}
where $N R_N= o(1)$ is uniformly  
in $s$ and $t$.
Consider a Markov superhedging strategy $\cP$ 
(cf.\ \cite{gozzi-vargiolu}) of Investor 
which chooses $M_n=(\partial/\partial s) \bar\payoffc(S_n, D_n)$.
By adding \eqref{eq:sv-proof} for $n=1,\dots,N$ we have
\begin{equation}
\label{eq:expansion-sv}
%\bar\payoffc(S_N,0) -\bar\payoffc(0,1)
\payoff(S_N) - \bar\payoffc(0,1)
= \cK_N^{\cP} - 
 \frac{1}{2N}\sum_{n=1}^N 
\big(\tilde\sigma^2(S_n, D_n) - N (dS_n)^2) \frac{\partial^2}{\partial s^2}\bar\payoffc(S_n, D_n)
+ o(1).
\end{equation}
At this point we adopt a suggestion by a referee.  From Proposition 
\ref{prop:single-step} and Remark \ref{rem:multistep-expectation} we know
that the upper hedging price is computed as the expected value of 
$\payoff(S_N)$ under the extremal risk neutral measure, say ${\bm p}^*={\bm p}^*_N$. Under 
any risk neutral measure, $\cK_n^{\cP}$, $n=1,\dots,N$, is a measure-theoretic martingale
and its expected value is zero.  Now consider the expected value of 
\[
\big(\tilde\sigma^2(S_n, D_n) - N (dS_n)^2) \frac{\partial^2}{\partial s^2}\bar\payoffc(S_n, D_n)
\]
under ${\bm p}^*$.  We can evaluate the expected value, first by
conditioning on $x_1, \dots, x_n$.  By the definition of $\tilde \sigma^2$,
under ${\bm p}^*$  the conditional variance of $\sqrt{N}dS_n$ satisfies
\begin{align*}
E (N (dS_n)^2 \mid x_1, \dots, x_n) &\le \tilde\sigma^2 \quad \text{if}\quad \frac{\partial^2}{\partial s^2}\bar\payoffc(S_n, D_n)\ge 0\\
E (N (dS_n)^2 \mid x_1, \dots, x_n) &\ge  \tilde\sigma^2 \quad \text{otherwise}.
\end{align*}
Therefore taking the unconditional expected value of
\eqref{eq:expansion-sv}
under ${\bm p}^*$ we have
$\bar E(\payoff_N)  - \bar\payoffc(0,1)\le 0$ except for a term of order $o(1)$.
Hence $\limsup_N \bar E(\payoff_N) \le \bar\payoffc(0,1)$.

Conversely, consider 
Market's randomized moves chosen according to the
extremal risk neutral measure corresponding to $\bar E(f_N)$, which is
concentrated to two outermost values $\{a_l^-, a_m^+\}$ or two innermost
values $\{a_1^-, a_1^+\}$ at each node of the game tree,  depending
on the sign of $\bar\payoffc''$ 
(see Remark \ref{rem:multistep-expectation}).
Investor's capital is a measure-theoretic martingale under this 
risk neutral measure.  Since the measure is supported on two
points at each node of the game tree, we can modify the 
standard argument for binomial models to show that
the expected value of the payoff $\payoff$ converges to 
$\bar\payoffc(0,1)$ under the measure.  On the other hand
$\bar E(f_N)$ is the supremum over all possible moves of Market.  
Therefore we have
$\bar\payoffc(0,1) \le \liminf_N \bar E(f_N)$.
\end{proof}

\begin{remark}
In the above proof we partly used measure theoretic arguments as suggested by a
referee.  Although we can give a purely game theoretic proof in the line of 
Section 6.2 of \cite{shafer-vovk}, it is somewhat tedious.
The difficulty lies in the fact that
$\frac{\partial^2}{\partial s^2}\bar\payoffc(S_n, D_n)$ is path-dependent.
Note that 
by  the 
game-theoretic law of large numbers (\cite{shafer-vovk}, \cite{kumon-takemura-2008}),
Investor can force that $S_N/\sqrt{N}$ converge to 0.  
This implies that for large $N$ the empirical distribution of
Market's moves is approximately a risk neutral measure % belonging to $\cP_1$.
and $\sum_{n=1}^N (dS_n)^2$ is the
variance of a risk neutral measure.  However because each 
$(dS_n)^2$ is multiplied by $\frac{\partial^2}{\partial s^2}\bar\payoffc(S_n, D_n)$,
the convergence $S_N/\sqrt{N}\rightarrow 0$ does not imply
 \[
 \liminf_N \frac{1}{N} \sum_{n=1}^N \big(\tilde\sigma^2(S_n, D_n) - N (dS_n)^2) \frac{\partial^2}{\partial s^2}\bar\payoffc(S_n, D_n) \ge 0.
 \]
Although the argument can be fixed by discretization of the values of 
$\frac{\partial^2}{\partial s^2}\bar\payoffc(S_n, D_n)$, we omit the details.
% Then by 
% our choice of $\tilde\sigma^2(S_n,D_n)$ we have
% \[
% \liminf_N \frac{1}{N} \sum_{n=1}^N \big(\tilde\sigma^2(S_n, D_n) - N (dS_n)^2) \frac{\partial^2}{\partial s^2}\bar\payoffc(S_n, D_n) \ge 0
% \]
% and this shows that $\cP$ with the initial capital of $\bar\payoffc(0,1)$ 
% is a superreplicating strategy. Therefore $\bar\payoffc(0,1) \ge \limsup_N \bar E(f_N)$.
\end{remark}

% \bigskip
% [I believe that the above proof is correct, but clearly the argument is not yet fully rigorous.  
% Furthermore the
% behavior of our PDE itself seems to be of great interest.
% Prof.\ Takeuchi in his memo of 2010/3/2 suggested sequential approximation
% of the solution.]

Numerically 
\eqref{eq:main-pde} can be solved by the following backward induction:
1) discretization of the interval $[0,1]$ 
and $\bbR$, 
2) approximation of the second derivative $(\partial^2/\partial s^2) \bar\payoffc$
by the second order difference of three neighboring points.
Actually this backward induction is entirely similar to the
exact backward induction in 
\eqref{eq:backward-step}.  When the discretization is not fine enough, then
the above numerical approximation corresponds to pruning of maximizations
discussed in Section \ref{sec:bounds}.  This suggests that a coarse discretization
of the partial differential equation yields an approximation which is less than
the the true $\bar\payoffc(s,0)$.

\section{Numerical examples}
\label{sec:examples}

In this section we check results of this paper by numerical computation.

We first calculate the upper hedging price and the lower hedging price of Butterfly spread option 
$\payoff (S_N)=\max(0,S_N + 0.5) - 2 \max(0,S_N - 0.5) + \max(0,S_N - 1.5)$
%$\payoff (S_N)= \max(0,S_N+0.5) +\min(0, -2(S_N-0.5)) +\max(0, S_N-1.5)$ 
in Figure \ref{fig:Butterfly} 
under the trinomial model $(a_1=-1/\sqrt{N}, a_2 = 1/\sqrt{N}, a_3 = 2/\sqrt{N})$. 
Although Butterfly spread does not satisfy the differentiability condition of 
Theorem \ref{thm:main}, it can be arbitrarily closely approximated by
a payoff function satisfying the condition of Theorem \ref{thm:main}.
The results are shown in Figure \ref{fig:Butterfly2} in conjunction with the price under the binomial models. 
From Figure \ref{fig:Butterfly2}, we see that the upper price and the lower price are different from the price under the binomial models.

\begin{figure}[htbp]
  \centering 
  \includegraphics[width=5cm,keepaspectratio,clip]{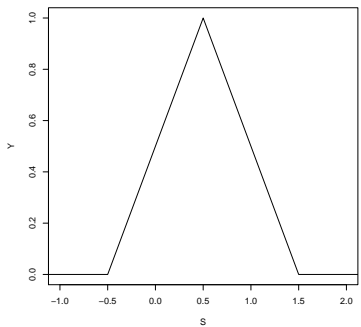}
  \caption{Butterfly spread}
  \label{fig:Butterfly}
\end{figure}
\begin{figure}[htbp]
  \centering
  \includegraphics[width=8cm,keepaspectratio,clip]{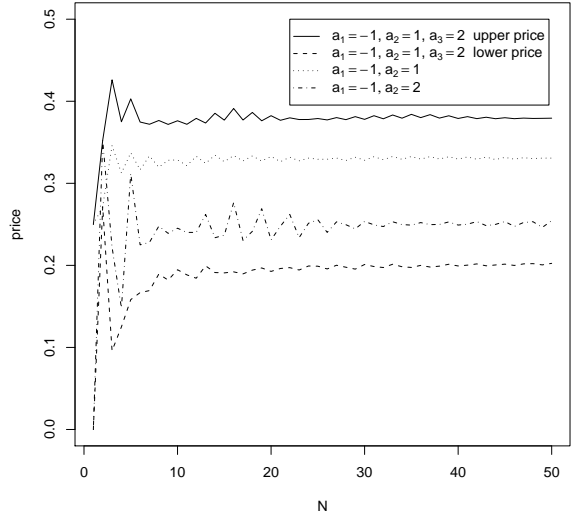}
  \caption{Comparison of binomial model and trinomial model}
  \label{fig:Butterfly2}
\end{figure}

We now add a new Market's move $a_4$ to the trinomial model and
compare the former trinomial model $(a_1=-1, a_2 = 1, a_3 = 2)$ to this quadnomial model. 
We consider the following three values of $a_4$ as depicted in Figure \ref{fig:Quad}.
\begin{enumerate}
\item $a_4=\frac{2.5}{\sqrt{N}}\ \ (a_1 < 0 < a_2 <a_3 <a_4)$ \qquad ``outside''.
\item $a_4=\frac{1.5}{\sqrt{N}}\ \ (a_1 < 0 < a_2 <a_4 <a_3)$ \qquad ``middle''.
\item $a_4=\frac{0.5}{\sqrt{N}}\ \ (a_1 < 0 < a_4 <a_2 <a_3)$ \qquad ``inside''.
\end{enumerate}

\begin{figure}[htbp]
  \centering 
  \includegraphics[width=10cm,keepaspectratio,clip]{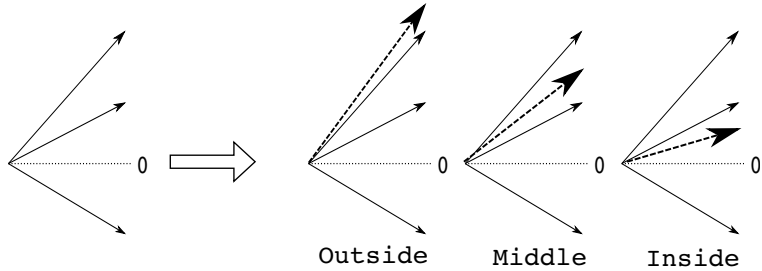}
  \caption{Expansion trinomial model into quadnomial model}
  \label{fig:Quad}
\end{figure}

Figures \ref{fig:Quad1}, \ref{fig:Quad2}, \ref{fig:Quad3} show the upper hedging prices of 
the butterfly spread for these quadnomial models compared to those of the trinomial model.
%. %$\payoff (S_N)= \max(0,S_N+0.5) +\min(0, -2(S_N-0.5)) +\max(0, S_N-1.5)$.
In Figure \ref{fig:Quad2} the upper hedging prices under the quadnomial model equal those  under the trinomial model with increasing $N$, whereas in Figures \ref{fig:Quad1}, \ref{fig:Quad3} the upper hedging prices under the quadnomial models differ from those under the trinomial model. 

\begin{figure}[htbp]
%  \centering  % 中央揃えするなら
\begin{minipage}{.33\linewidth}
\begin{center}
\includegraphics[width=5.7cm]{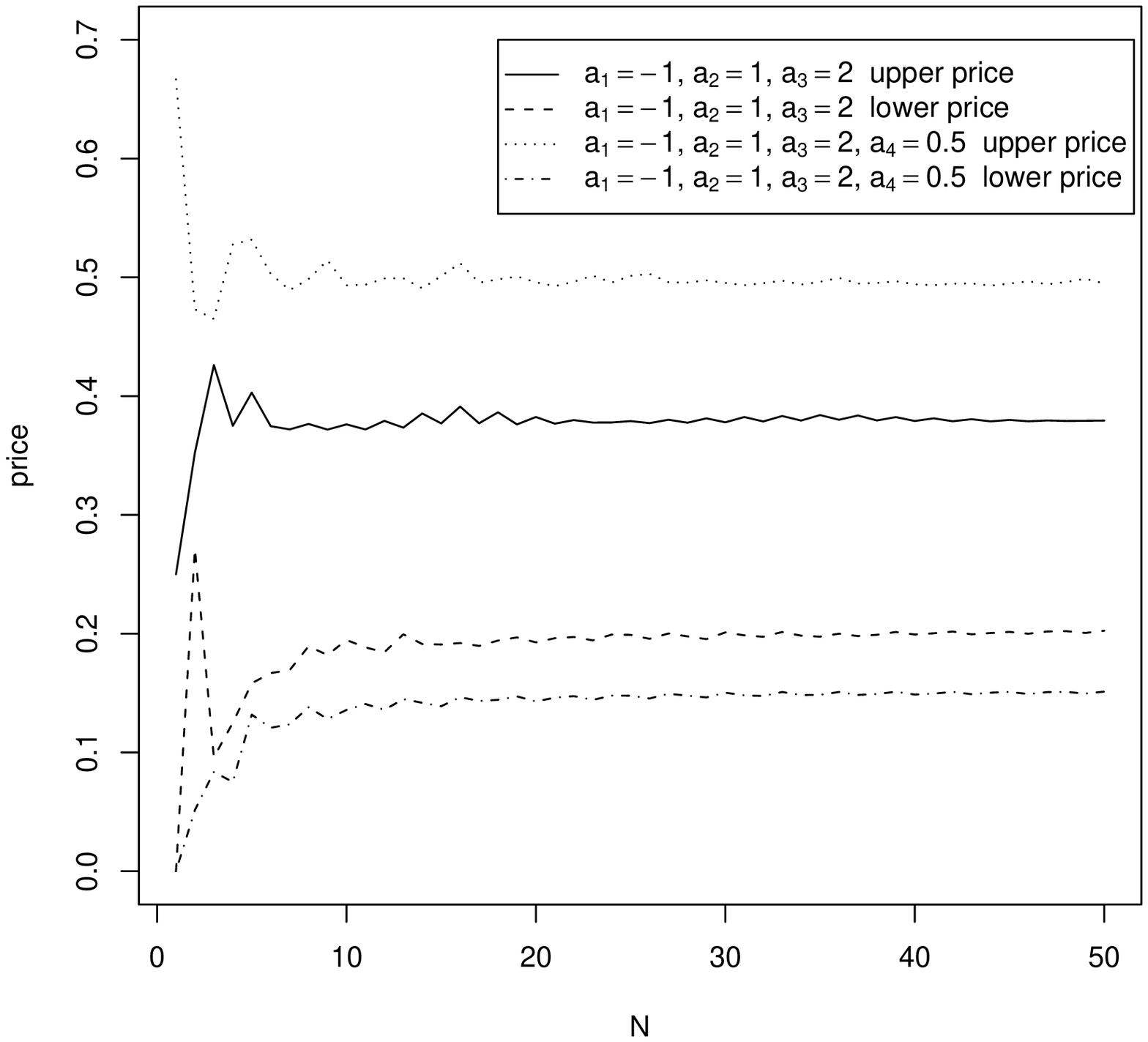}
 \caption{ $(\frac{2.5}{\sqrt{N}}=a_4 > a_3)$}
  \label{fig:Quad1}
\end{center}
\end{minipage}
\begin{minipage}{.33\linewidth}
\begin{center}
  \includegraphics[width=5.7cm]{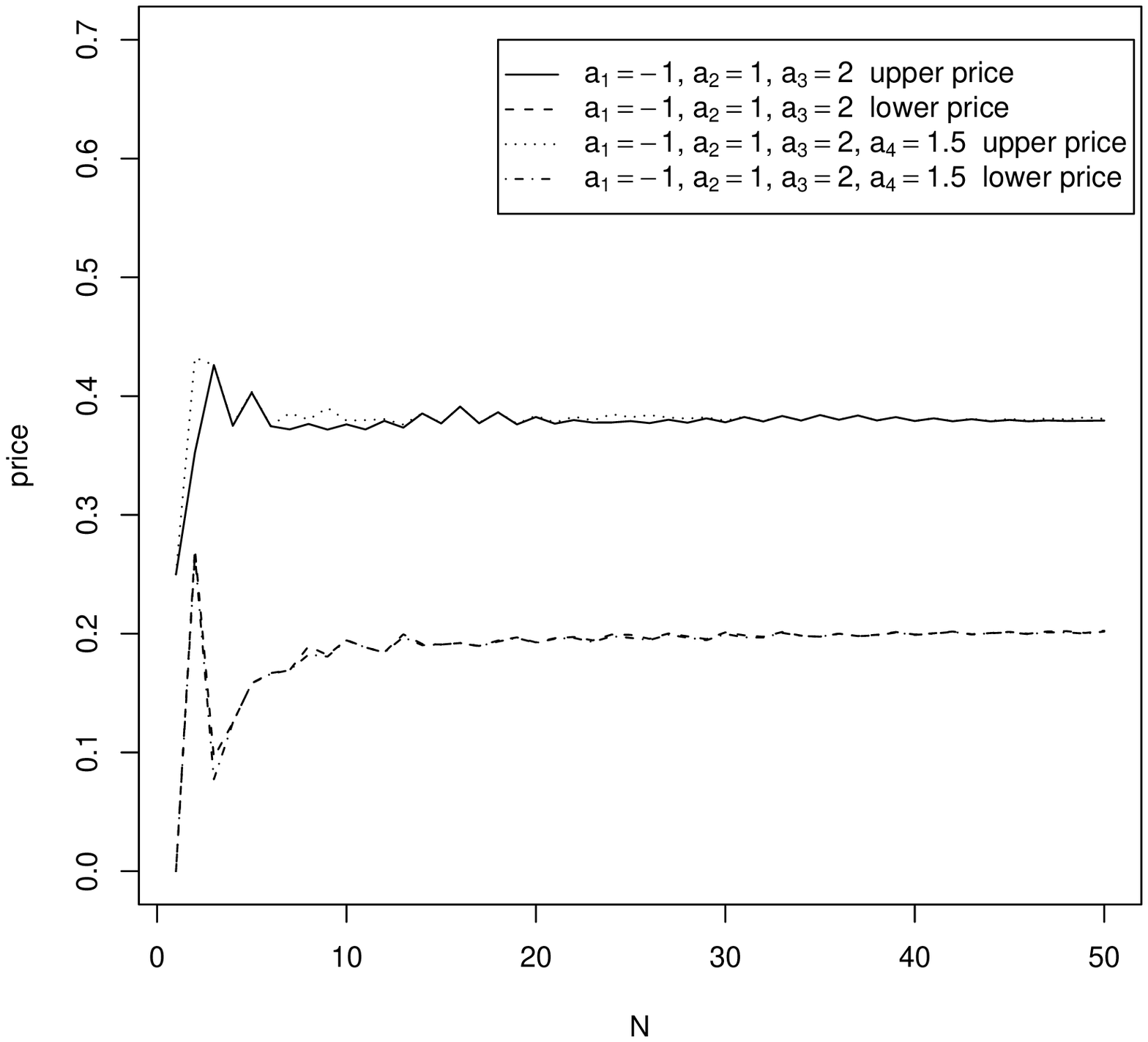}
  \caption{\footnotesize $(a_2 < \frac{1.5}{\sqrt{N}}=a_4 < a_3)$}
  \label{fig:Quad2}
\end{center}
\end{minipage}\nolinebreak
\begin{minipage}{.37\linewidth}
\begin{center}
  \includegraphics[width=5.7cm]{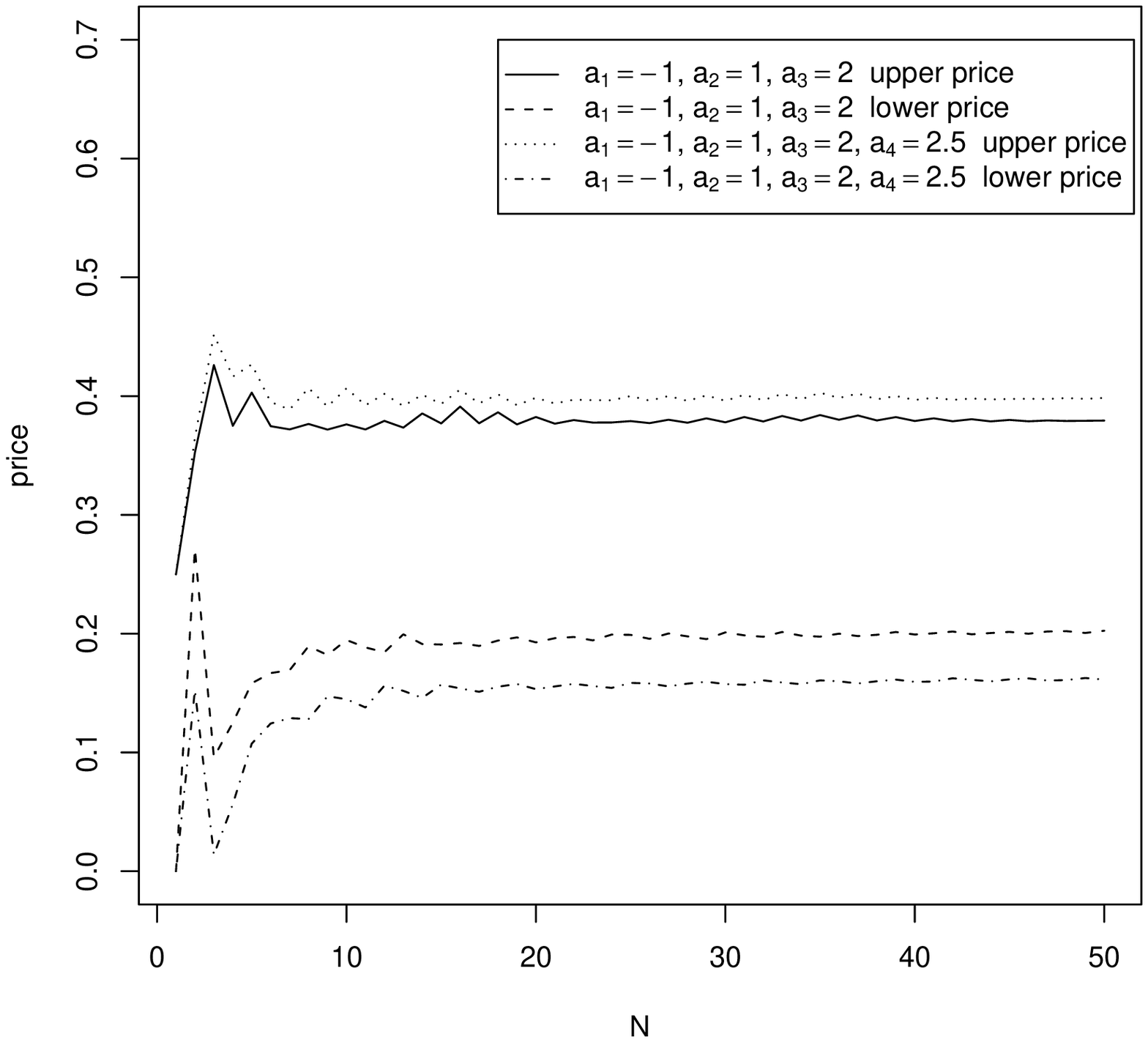}
  \caption{$(\frac{0.5}{\sqrt{N}}=a_4 < a_2)$}
  \label{fig:Quad3}
\end{center}
\end{minipage}
\end{figure}

Next we consider the payoff $\payoff = \sin (10S_n)$, which has a lot of changes from convexity to concavity,  and similarly calculate the upper hedging prices.
The results are shown in Figures \ref{fig:Quad4}, \ref{fig:Quad5}, \ref{fig:Quad6}. 
Also in this case the upper hedging prices under the quadnomial model equal those under the trinomial model with increasing $N$, provided that $a_4 = 1.5$.

\begin{figure}[htbp]
%  \centering  % 中央揃えするなら
\begin{minipage}{.33\linewidth}
\begin{center}
  \includegraphics[width=5.7cm]{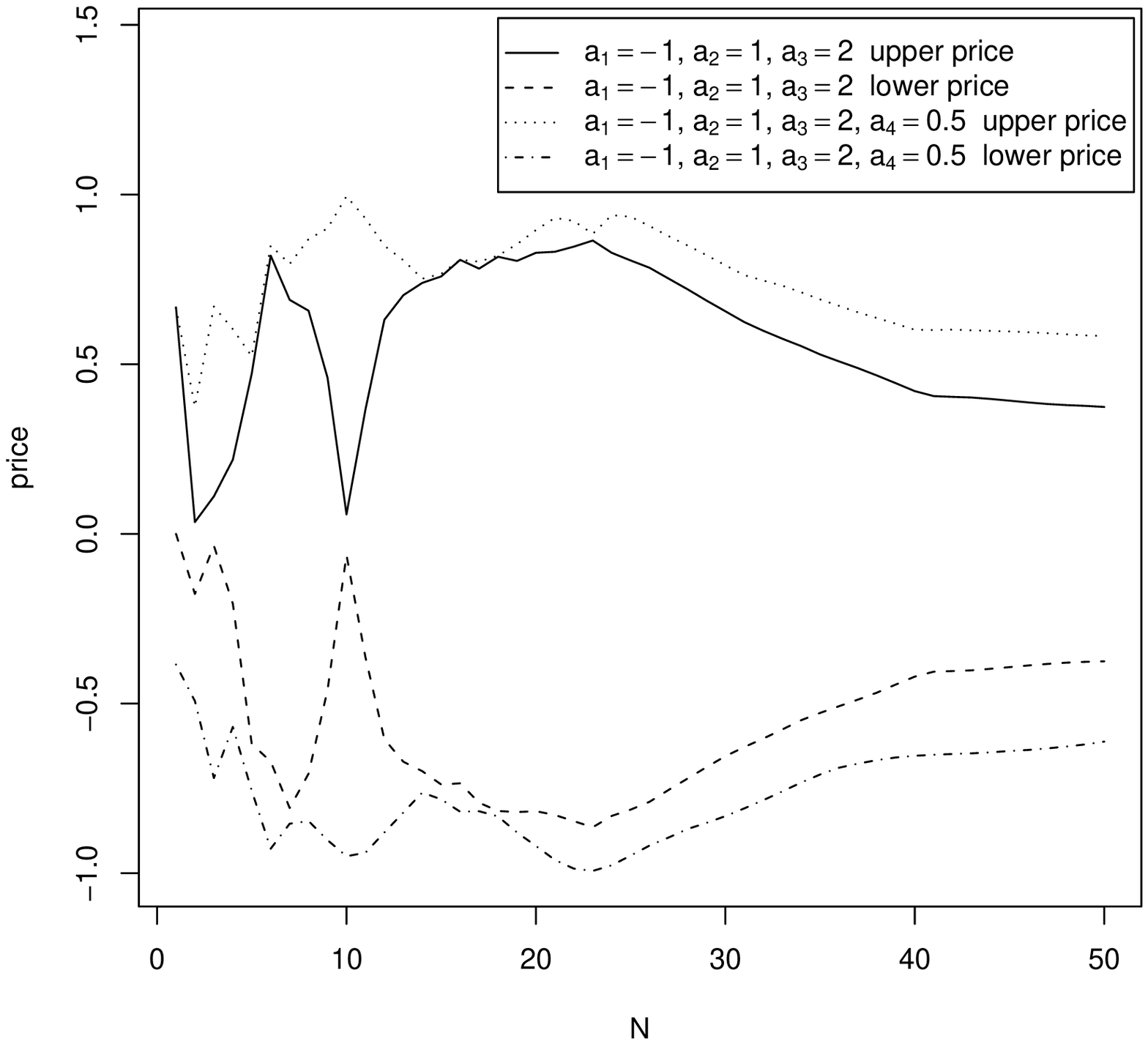}
  \caption{$(\frac{2.5}{\sqrt{N}}=a_4 > a_3)$}
  \label{fig:Quad4}
\end{center}
\end{minipage}
\begin{minipage}{.33\linewidth}
\begin{center}
  \includegraphics[width=5.7cm]{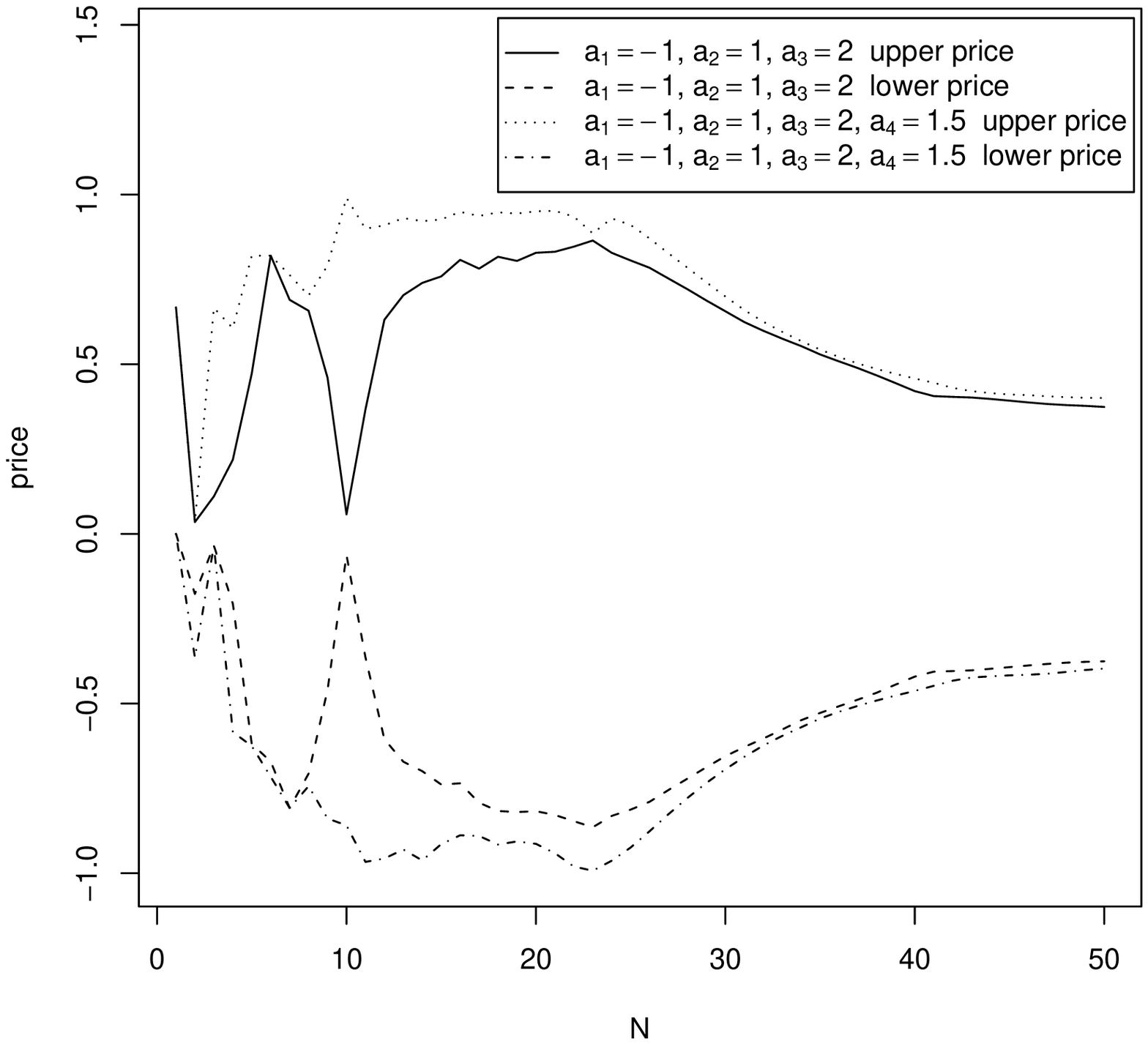}
  \caption{\footnotesize $(a_2 < \frac{1.5}{\sqrt{N}}=a_4<a_3)$}  
  \label{fig:Quad5}
\end{center}
\end{minipage}\nolinebreak
\begin{minipage}{.37\linewidth}
\begin{center}
  \includegraphics[width=5.7cm]{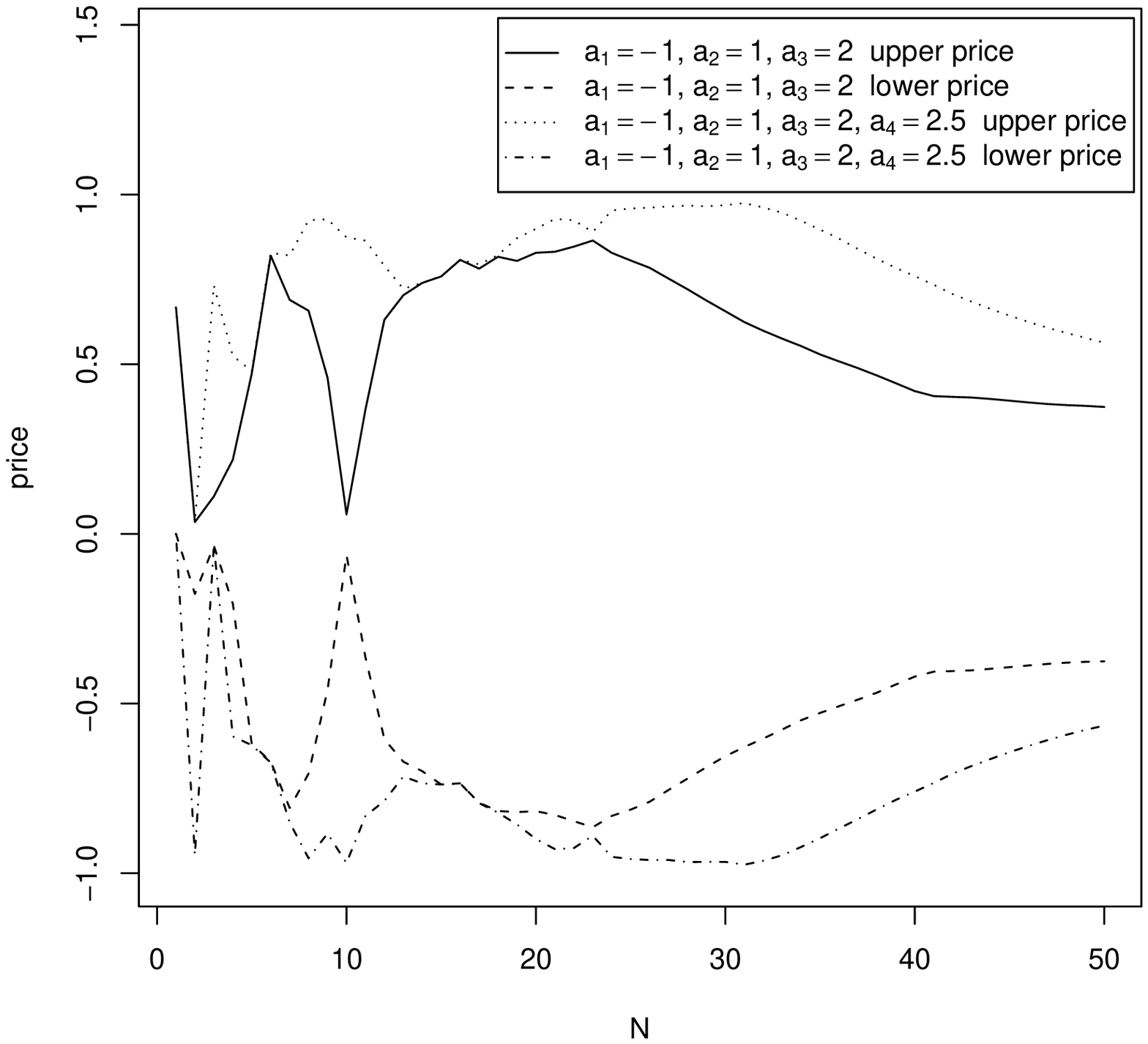}
\caption{$(\frac{0.5}{\sqrt{N}}=a_4 < a_2)$}
 \label{fig:Quad6}
\end{center}
\end{minipage}
\end{figure}

\begin{figure}[thbp]
  \centering  % 中央揃えするなら
  \includegraphics[width=11cm,keepaspectratio,clip]{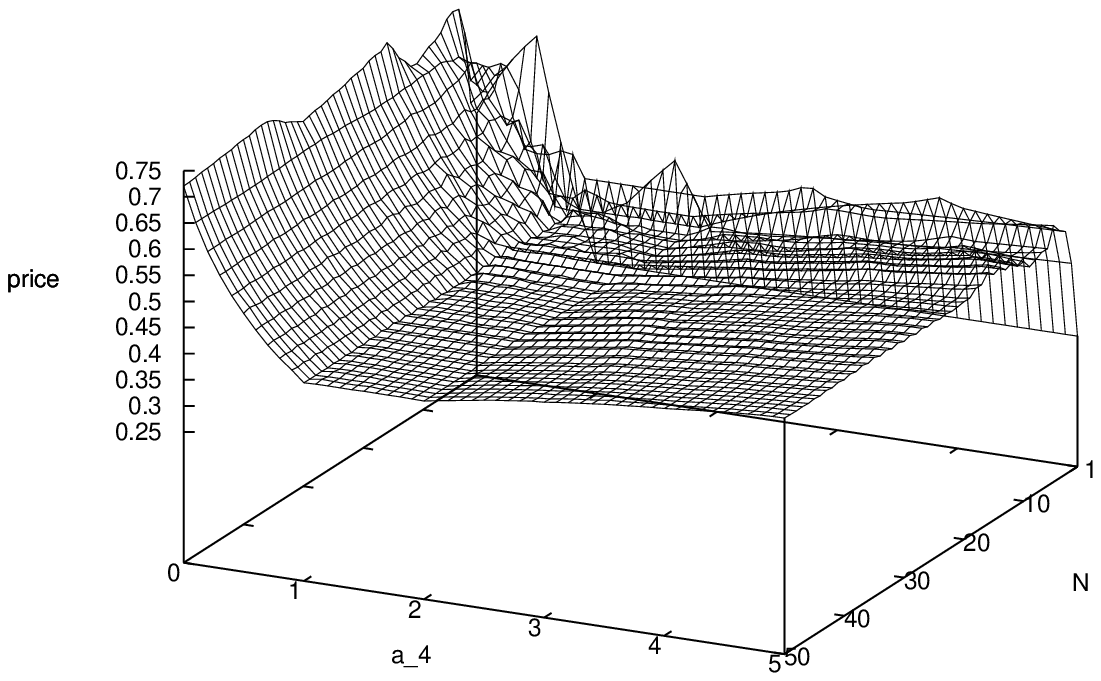}
%  \caption{Comparison of trinomial model and quadnomial model}
  \caption{Comparison of trinomial model and quadnomial model for various $a_4$}
  \label{fig:Quad3d}
\end{figure}
Next, we vary the values of $a_4$ from 0 to 5 by $0.1$. Figure \ref{fig:Quad3d} displays the plot of the upper hedging prices of the butterfly spread for $1\le N\le 50$ and $0\le a_4 \le 5$.
From Figure \ref{fig:Quad3d}, we see that the upper hedging prices converge to an equal value in the interval $1 \le a_4 \le 2$.

Finally Figure  \ref{fig:pde3d} shows a numerical solution to the
partial differential equation (\ref{eq:main-pde}) for $0\le t\le 1$ and
$-2\le s \le 2$ for the case of Butterfly spread
$\payoff (s)=\max(0,S_N + 0.5) - 2 \max(0,S_N - 0.5) + \max(0,S_N - 1.5)$.
%$\payoff (s)= \max(0,s+0.5) +\min(0, -2(s-0.5)) +\max(0, s-1.5)$.
We compute an approximation of $\bar\phi(s,t)$ by the following difference scheme:
\begin{equation}
\label{eq:difference}
\frac{\bar \payoffc^{n+1}_i - \bar \payoffc^{n}_i}{\Delta t}=\frac{\tilde \sigma^2}{2}\frac{\bar \payoffc^{n}_{i+1} -2\bar \payoffc^{n}_{i} + \bar \payoffc^{n}_{i-1}}{\Delta s^2}, \qquad 
\begin{cases}
\tilde \sigma^2 = \bar\sigma^2, &  {\rm if}\  
\bar \payoffc^{n}_{i+1} -2\bar \payoffc^{n}_{i} + \bar \payoffc^{n}_{i-1}\ge 0 ,\\
\tilde \sigma^2=\underline\sigma^2,  &  {\rm if}\ \bar \payoffc^{n}_{i+1} -2\bar \payoffc^{n}_{i} + \bar \payoffc^{n}_{i-1}< 0.
\end{cases} 
\end{equation}
We rewrite (\ref{eq:difference}) as
\begin{equation}
\label{eq:difference2}
\bar \payoffc^{n+1}_i = \bar \payoffc^{n}_{i} + \frac{\tilde \sigma^2\Delta t}{2\Delta s^2}(\bar \payoffc^{n}_{i+1} -2\bar \payoffc^{n}_{i} + \bar \payoffc^{n}_{i-1}), \qquad 
\begin{cases}
\tilde \sigma^2 = \bar\sigma^2, &  {\rm if}\  
\bar \payoffc^{n}_{i+1} -2\bar \payoffc^{n}_{i} + \bar \payoffc^{n}_{i-1}\ge 0 ,\\
\tilde \sigma^2=\underline\sigma^2,  &  {\rm if}\ \bar \payoffc^{n}_{i+1} -2\bar \payoffc^{n}_{i} + \bar \payoffc^{n}_{i-1}< 0.
\end{cases} 
\end{equation}
We set $\underline\sigma^2 = 1$ and $\bar\sigma^2 = 2$ (\ref{eq:main-pde}).
For discretization we use $\Delta t = \frac{1}{300}$ and $\Delta s = \frac{1}{10}$,
which satisfies the stability condition 
(Section 8.4 of \cite{wilmott-howison-dewynne} or page 47 of \cite{smith-1985})
\[
\frac{\Delta t}{(\Delta s)^2} =\frac{1}{3} \le \frac{1}{2}
\]
for discretization of the heat equation
Since our partial differential equation (\ref{eq:main-pde}) can be understood
as a  piecewise heat equation, in our numerical experiments we found that 
$\Delta t$ and $\Delta s$ satisfying the same stability condition works well.
With $\Delta t = \frac{1}{300}$ and $\Delta s = \frac{1}{10}$,
we obtain $\bar \payoffc(0,1) \approx 0.3817$ and $\underline \payoffc(0,1) \approx 0.2060$ using difference scheme (\ref{eq:difference2}). 
In Figure \ref{fig:lower}, 
we compute (\ref{eq:main-pde2}) 
% the following partial differential equation 
for the lower prices by the similar difference method.
% \begin{equation}
% \label{eq:main-pde2}
% \frac{\partial}{\partial t} \underline \payoffc(s,t)= \frac{\tilde \sigma^2}{2}
% \frac{\partial^2}{\partial s^2} \underline\payoffc(s,t), \qquad 
% \begin{cases}
% \tilde \sigma^2=\underline\sigma^2, &  {\rm if}\  
% \frac{\partial^2}{\partial s^2} \underline\payoffc(s,t)\ge 0 ,\\
% \tilde \sigma^2=\bar\sigma^2,  &  {\rm if}\ 
% \frac{\partial^2}{\partial s^2} \underline\payoffc(s,t)< 0.
% \end{cases}
% \end{equation}
%We note that lower prices follow the partial differential equation (\ref{eq:main-pde2}).
Table \ref{tab:upper} shows the upper prices and the lower prices obtained in Figure \ref{fig:Butterfly2}. We find that these converge to the values obtained by the difference method for the partial differential equations (\ref{eq:main-pde}) and (\ref{eq:main-pde2}).

%\begin{figure}[htbp]
  %\centering  %
  %\includegraphics[width=11cm,keepaspectratio,clip]{pde.eps}
  %\caption{Numerical solution of the PDE \eqref{eq:main-pde}}
  %\label{fig:pde3d}
%\end{figure}

\begin{figure}[htbp]
%  \centering  % 中央揃えするなら
\begin{minipage}{.5\linewidth}
\begin{center}
\includegraphics[width=8cm]{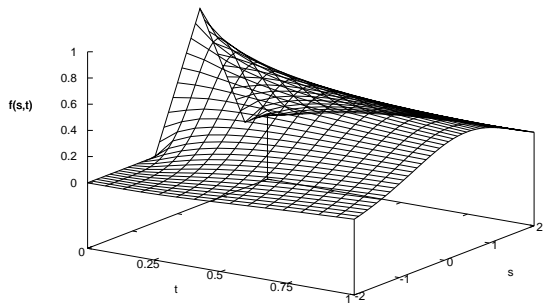}
 \caption{\footnotesize Numerical solution of the PDE \eqref{eq:main-pde}}
  \label{fig:pde3d}
\end{center}
\end{minipage}
\begin{minipage}{.5\linewidth}
\begin{center}
  \includegraphics[width=8cm]{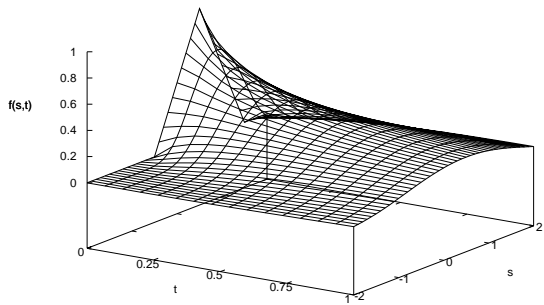}
  \caption{\footnotesize Numerical solution of the PDE \eqref{eq:main-pde2}}
  \label{fig:lower}
\end{center}
\end{minipage}
\end{figure}

\begin{table}[htbp]
\begin{center}
	\caption{Upper prices and lower prices of trinomial model}
	\begin{tabular}{|c|c|c|} \hline
	 $N$ & upper price & lower prices \\ \hline
	 1 & 0.2500 & 0.0000\\ \hline
	 20 & 0.3824 &0.1926  \\ \hline
	 40 & 0.3790& 0.1993 \\ \hline
	 60 & 0.3820 &0.2012 \\ \hline
	 80 & 0.3799 &0.2032  \\ \hline
	 100 & 0.3807 &0.2032
 \\ \hline
	\end{tabular}
	\label{tab:upper}
\end{center}
\end{table}

\section{Concluding remarks}
\label{sec:remarks}

In this paper we discussed various approximations and asymptotics
of upper hedging prices in multinomial models.  
In particular we showed that, as the number of rounds goes to infinity,
the upper hedging price of a European option converges 
to the solution of an additive form of the
Black-Scholes-Barenblatt equation.
By numerical experiments we checked that this convergence is fast
and the asymptotic approximation is useful.

A multinomial model is the simplest example of incomplete market.  A natural
extension of a multinomial model is the bounded forecasting game 
(\cite{shafer-vovk}), where Market's move is
a bounded interval containing the origin. This problem was already considered in
\cite{ruschendorf}.  Most results of this paper can be extended to the bounded
forecasting game.

Usually the Black-Scholes-Barenblatt equation is studied
in the case of vector-valued processes.  Then
the maximum variance and the minimum variance are no longer uniquely determined
and the maximization in each step of the game tree is more complicated.
Numerical studies of vector-valued cases are left to our future investigation.

% More difficult generalization is to the case of vector-valued processes or portfolios.
% If Investor can trade more than one assets, then the maximum variance and the minimum
% variance are no longer uniquely determined.   Theoretically our partial differential equation
% can be generalized to vector-valued case.  However numerical treatment of the resulting
% equation is expected to be much more difficult.    This is left to our future investigation.

\medskip
\noindent
{\bf Acknowledgments.}\quad  The authors are grateful to 
two reviewers for their careful reading and constructive comments.
We thank Takayasu Matsuo
for very useful comments on numerical solution of the
partial differential equation and Hitoshi Ishii for 
his very kind explanation on existence of the classical solution 
or the viscosity solution of the partial differential equation.

\end{document}